\documentclass{article}

\usepackage[preprint, nonatbib]{neurips_2025}

\usepackage[utf8]{inputenc} 
\usepackage[T1]{fontenc}    
\usepackage{hyperref}       
\usepackage{url}            
\usepackage{booktabs}       
\usepackage{amsfonts}       
\usepackage{nicefrac}       
\usepackage{microtype}      
\usepackage{xcolor}         
\usepackage{amsmath}
\usepackage{amsthm}
\usepackage{amssymb}
\usepackage{graphicx}
\usepackage{algorithm}
\usepackage{algpseudocode}
\usepackage{tablefootnote}

\usepackage{biblatex}
\bibliography{references}

\newtheorem*{assumption*}{Assumption}
\newtheorem*{proposition*}{Proposition}
\newtheorem*{definition*}{Definition}
\newtheorem*{theorem*}{Theorem}

\title{Making Multi-Axis Models Robust to Multiplicative Noise: How, and Why?}

\author{%
  Bailey Andrew \\
  School of Computer Science\\
  University of Leeds\\
  Woodhouse, Leeds, UK; LS2 9JT \\
  \texttt{sceba@leeds.ac.uk} \\
  \And
  David R. Westhead \\
  School of Molecular and Cellular Biology \\
  University of Leeds \\
  \And
  Luisa Cutillo \\
  School of Mathematics \\
  University of Leeds \\
}

\newcommand{\rows}{\mathrm{rows}}
\newcommand{\cols}{\mathrm{cols}}
\newcommand{\cells}{\mathrm{cells}}
\newcommand{\genes}{\mathrm{genes}}

\begin{document}

\maketitle

\begin{abstract}
  In this paper we develop a graph-learning algorithm, MED-MAGMA, to fit multi-axis (Kronecker-sum-structured) models corrupted by multiplicative noise.  This type of noise is natural in many application domains, such as that of single-cell RNA sequencing, in which it naturally captures technical biases of RNA sequencing platforms.  Our work is evaluated against prior work on each and every public dataset in the Single Cell Expression Atlas under a certain size, demonstrating that our methodology learns networks with better local and global structure.  MED-MAGMA is made available as a Python package (\texttt{MED-MAGMA}).
\end{abstract}

\section{Introduction}
\label{sec:med-magma-introduction}

This paper deals with the problem of how to make Kronecker-structured (or `multi-axis') methods robust to a form of noise common in applied domains such as those using single-cell RNA sequencing data (scRNA-seq; matrix-variate data in which rows represent cells, columns represent genes, and entries represent the amount a given gene is expressed in a given cell).  Multi-axis models aim to simultaneously learn matrices $\mathbf{\Psi}_\rows, \mathbf{\Psi}_\cols$ that encode dependency networks (`graphs') between the rows and the columns, respectively.  For scRNA-seq, this would yield both cell and gene networks, useful for tasks such as cell subtyping and gene regulatory network inference.

In particular, we aim to fit models of the following form for matrix-variate data, where $\zeta$ is some function whose role is to mix the $O(d_\rows)$ row and $O(d_\cols)$ column dependencies into a network of $O(d_\rows d_\cols)$ matrix entry dependencies.

\begin{align*}
    \mathrm{latent\text{ }data} &\sim \mathcal{N}\left(\mathbf{0}, \zeta\left(\mathbf{\Psi}_\rows, \mathbf{\Psi}_\cols\right)^{-1} \right) \\
    \mathrm{observed\text{ }data}_{ij} &= r_i^\rows r_j^\cols \left(\mathrm{latent\text{ }data}_{ij}\right) \tag{multiplicative noise}\\
    r_i^\rows, r_j^\cols &\sim \text{any strictly positive distribution}
\end{align*}

There are many methods to fit the above model without noise under a variety of choices of $\zeta$, which we will discuss in Section \ref{sec:med-magma-background}.  However, no such method exists to fit the distributions under this noise regime.  This is unfortunate, as not only is this type of noise common, but taking it into account also dramatically increases the flexibility of the model.

In scRNA-seq, this type of noise arises from two simple observations.  Some genes are more likely to be measured than others due to biases such as GC-bias \parencite{dohm_substantial_2008, benjamini_summarizing_2012}, and cells have less measured gene expression than reality due to the incompleteness of the measuring process.  If the true measurement of a gene $g$ in cell $c$ is $x$, but only $0 \leq r_c^\cells \leq 1$ of the expression in that cell is collected, and gene $g$ is $r_g^\genes $ times as likely to be collected than the other genes, then we should expect to observe a measurement of $r_g^\genes r_c^\cells x$ rather than $x$ itself.

Even in cases where this noise type does not exist, the flexibility of models accounting for this noise is beneficial.  This is because our model is a natural generalization of the class of elliptical distributions to the matrix-variate case.  If a dataset follows the model below, then it has an elliptical distribution:

\begin{align*}
    \mathrm{latent\text{ }data} &\sim \mathcal{N}\left(\mathbf{0}, \mathbf{\Omega}^{-1} \right) \\
    \mathrm{observed\text{ }data} &= r \cdot \left(\mathrm{latent\text{ }data}\right) \\
    r &\sim \text{any strictly positive distribution}
\end{align*}

Elliptical distributions encompass many other distributions, such as the normal distribution (when $r=1$), the multivariate t distribution (when $r\sim\sqrt{\frac{\nu}{\chi^2_\nu}}$), and the multivariate Laplace distribution (when $r^2 \sim \mathrm{Exponential}(1)$).  In fact, any distribution whose density has \textit{elliptical contours} can, under mild conditions, be expressed in this form.

Most multi-axis models either assume the dataset is Gaussian, or make the weaker `Gaussian copula' assumption (i.e. arbitrary marginals but the variables `interact Gaussian-ly'); the latter is often fit via the use of the nonparanormal skeptic \parencite{liu_nonparanormal_2012}.  Gaussian copulas are restrictive in terms of the types of dependencies they allow.  They do not allow `tail dependence' - extreme events are always asymptotically uncorrelated.  However, Figure \ref{fig:med-magma-tail-dependence} implies that tail dependence \textit{does} exist in scRNA-seq data.  Elliptical distributions allow for such extremal dependencies.

\begin{figure}[h!]
    \centering
    \includegraphics[width=\linewidth]{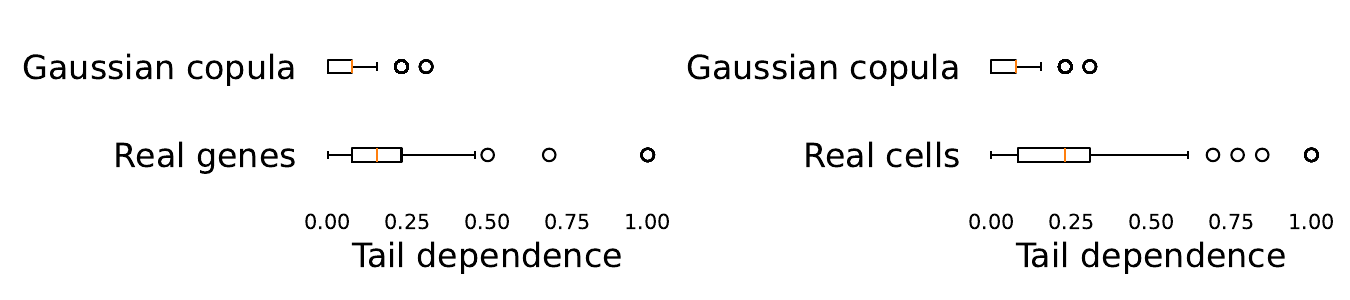}
    \vspace{-18pt}
    \caption{Estimated tail dependence across 1000 randomly selected pairs of genes (left) and cells (right) from the E-MTAB-7249 scRNA-seq dataset discussed in Section \ref{sec:med-magma-specific}, compared to random data generated from a Gaussian copula.}
    \label{fig:med-magma-tail-dependence}
\end{figure}

Thus, it should be clear why our proposed model is useful: it is robust to a natural type of noise, and strongly increases the expressive power of multi-axis models.  In this paper, we will produce an algorithm (MED-MAGMA: the \textbf{m}atrix \textbf{e}lliptical \textbf{d}istribution's \textbf{m}ulti-\textbf{a}xis \textbf{g}raphical \textbf{m}odelling \textbf{a}lgorithm) to learn $\mathbf{\Psi}_\rows, \mathbf{\Psi}_\cols$ under the presented model (Chapter \ref{sec:med-magma-methodology}) and in Chapter \ref{sec:med-magma-results} comprehensively validate its performance across every sub-1000-cell dataset in the public Single Cell Expression Atlas database \parencite{madrigal_expression_2026}.  In the next section we give the motivation and history behind such models.

\section{Background}
\label{sec:med-magma-background}

Multi-axis models were introduced to solve the problem of inference when samples are not independent and identically distributed (i.i.d.).  If one is willing to assume that the dataset $\{\mathbf{d}_i\}$ \textit{is} i.i.d, then a simple formulation would be:

\begin{align*}
    \mathbf{d}_i \sim \mathcal{N}\left(\mathbf{0}, \mathbf{\Psi}_\cols^{-1}\right)
\end{align*}

In this case, $\mathbf{\Psi}_\cols$ encodes the graph of conditional dependencies:

\begin{align*}
    \mathbf{\Psi}^\cols_{ij} = 0 \iff \text{feature (column) } i \text{ is conditionally independent from feature } j
\end{align*}

Two variables being conditionally independent means that, after conditioning out the rest of the dataset, they are statistically independent from one another.  In other words, conditional dependencies intuitively capture the concept of `direct correlations': if $A$ is correlated with $B$, which in turn is correlated with $C$, then $A$ and $C$ will almost always also be correlated to some degree, via their indirect connection through $B$.  However, $A$ and $C$ will not be \textit{conditionally dependent}, as their indirect connection will have been conditioned out.  This connection motivates the use of conditional dependencies: knowledge of direct effects is much more valuable than indirect effects.

For more complex datasets, the i.i.d. assumption breaks down.  In particular, the `samples' of a scRNA-seq dataset are cells of various different cell types, which had been interacting with other cells in the dataset for their entire life-cycle, until being destroyed and collected for data analysis.  In this scenario, not only are the cells not independent, but the dependencies between the cells are also of interest in their own right.  Compiling the dataset into a $d_\rows \times d_\cols$ matrix $\mathbf{D}$, we could address this by using the following distribution, where $\mathrm{vec}$ is the vectorization operation that stacks the columns of a matrix into a single vector:

\begin{align*}
    \mathrm{vec}\left[\mathbf{D}\right] \sim \mathcal{N}\left(\mathbf{0}, \mathbf{\Omega}^{-1}\right)
\end{align*}

However, $\mathbf{\Omega}$ would be a $d_\rows d_\cols \times d_\rows d_\cols$ matrix capturing the dependencies between (cell, gene) pairs - it is computationally intractible to store and statistically intractible to learn.  What we want is to decompose $\mathbf{\Omega}$ into tractable chunks directly representing the rows and columns: $\mathbf{\Omega} = \zeta\left(\mathbf{\Psi}_\rows, \mathbf{\Psi}_\cols\right)$.  Rather than $O(d_\rows^2 d_\cols^2)$ parameters, this framework only requires $O(d_\rows^2 + d_\cols^2)$.  There are many choices for $\zeta$, but they are all based on the Kronecker product $\otimes$:

\[
\left[\begin{smallmatrix} a_{11} & a_{12} & \cdots \\ a_{21} & \cdots & \\ \cdots & & \end{smallmatrix}\right] \otimes \mathbf{B} = \left[\begin{smallmatrix} a_{11}\mathbf{B} & a_{12}\mathbf{B} & \cdots \\ a_{21}\mathbf{B} & \cdots & \\ \cdots & & \end{smallmatrix}\right]
\]


The following are the choices already considered in the literature:

\begin{align*}
    \zeta\left(\mathbf{\Psi}_\rows, \mathbf{\Psi}_\cols\right) &= \mathbf{\Psi}_\cols \otimes \mathbf{I} \tag{Rows are i.i.d, same as before} \\
    \zeta\left(\mathbf{\Psi}_\rows, \mathbf{\Psi}_\cols\right) &= \mathbf{I} \otimes \mathbf{\Psi}_\rows \tag{Columns are i.i.d} \\
    \zeta\left(\mathbf{\Psi}_\rows, \mathbf{\Psi}_\cols\right) &= \mathbf{\Psi}_\cols \otimes \mathbf{\Psi}_\rows \tag{`Strong product' assumption} \\
    \zeta\left(\mathbf{\Psi}_\rows, \mathbf{\Psi}_\cols\right) &= \mathbf{\Psi}_\cols \otimes \mathbf{I} + \mathbf{I} \otimes \mathbf{\Psi}_\rows \tag{`Cartesian product' assumption} \\
    \zeta\left(\mathbf{\Psi}_\rows, \mathbf{\Psi}_\cols\right) &= \left(\mathbf{\Psi}_\cols \otimes \mathbf{I} + \mathbf{I} \otimes \mathbf{\Psi}_\rows\right)^2 \tag{Sylvester generative model}
\end{align*}

There are also models which break $\mathbf{\Omega}$ up into three axes, distinguishing between different \textit{types} of row dependencies \parencite{andrew_strong_2025}.  The strong product assumption is the oldest \parencite{dawid_matrix-variate_1981}, although it is often misreported as the `tensor product' assumption.  The Cartesian product assumption was introduced by \textcite{kalaitzis_bigraphical_2013} and the Sylvester generative model by \textcite{wang_sylvester_2020}.  For scRNA-seq data, the assumptions above correspond to the following statements:

\begin{assumption*}[Strong product assumption]
    The strength of the dependency between the expression of a gene $g_1$ in a cell $c_1$ and the expression of a gene $g_2$ in a cell $c_2$ is equal to the product of the dependencies between $c_1$ and $c_2$, and $g_1$ and $g_2$.

    In other words, genes depend on the expression of other genes within the cell and genes in other cells.
\end{assumption*}

\begin{assumption*}[Cartesian product assumption]
    A gene $g_1$ may only be conditionally dependent ($\sim$) on other genes $g_2$ within the same cell $c_1$, or the same gene in other cells $c_2$: $(c_1, g_1) \sim (c_1, g_2)$, $(c_1, g_1) \sim (c_2, g_1)$, and $(c_1, g_1) \nsim (c_2, g_2)$.  Just because they \textit{may} be dependent does not mean they must be.

    In other words, genes can only directly affect the expression of other genes within the same cell (or the same gene in different cells).
\end{assumption*}

\begin{assumption*}[Sylvester generative model]
    \[
    \mathbf{\Psi}_\rows \mathbf{D} + \mathbf{D} \mathbf{\Psi}_\cols \sim \mathcal{N}(\mathbf{0}, \mathbf{I})
    \]

    This model naturally arises when the dataset is generated from certain partial differential equations.
\end{assumption*}

For scRNA-seq data, the Cartesian product is the most natural.  While genes can interact with genes in other cells in certain scenarios, such interactions will be much weaker than within-cell interactions; the strong product assumption would give equal weight to within- and between-cell interactions.

The operation $\mathbf{A}\otimes\mathbf{I} + \mathbf{I}\otimes\mathbf{B}$ is denoted the `Kronecker sum': $\mathbf{A}\oplus \mathbf{B}$.  Many algorithms exist to fit Kronecker-sum-structured normal distributions, the most scalable being GmGM \cite{andrew_gmgm_2024}.  However, none fit such distributions under the scenario of multiplicative noise.

\section{Methodology}
\label{sec:med-magma-methodology}

To fit the model, we will perform maximum likelihood estimation (MLE) using expectation maximization (EM).  Let us rephrase our model for a dataset $\mathbf{x} = \mathrm{vec}\left[\mathbf{X}_{d_\rows \times d_\cols}\right]$ as follows:

\begin{align*}
    \mathbf{x} &= \left(\mathbf{r}_\cols \otimes \mathbf{r}_\rows\right) \circ \mathbf{z} \\ 
    \mathbf{z} &\sim \mathcal{N}\left(\mathbf{0}, \left(\mathbf{\Psi}_\cols \oplus \mathbf{\Psi}_\rows\right)^{-1}\right) \\
    \mathbf{r}_\rows,\mathbf{r}_\cols &\sim \text{any strictly positive distribution}
\end{align*}

The $\mathbf{r}$ variables are allowed to have dependencies among themselves, but they must be independent of the latent variable $\mathbf{z}$.  As a shorthand, we will let $\mathbf{\Omega} = \mathbf{\Psi}_\cols \oplus \mathbf{\Psi}_\rows$.  As further shorthand, we will implicitly let capital variables $\mathbf{V}$ and lowercase variables of the same name $\mathbf{v}$ follow the relationship $\mathbf{v} = \mathrm{vec}\left[\mathbf{V}\right]$.  Our final algorithm is summarized below.  In the following sections, we will derive the algorithm step-by-step.

\begin{algorithm*}[h!]
\caption{MED-MAGMA}
\label{alg:med-magma-algorithm}
\begin{algorithmic}
    \Require A $d_\rows \times d_\cols$ dataset $\mathbf{X}$ with vectorization $\mathbf{x}$
    \State $\mathbf{y} \gets f(\mathbf{x})$ \Comment{Remove noisy directions (Section \ref{sec:med-magma-deriving-density})}
    \State $\mathbf{\Psi}_\rows \gets \mathbf{I}$ \Comment{Initialize estimates}
    \State $\mathbf{\Psi}_\cols \gets \mathbf{I}$
    \Repeat
        \State Find $\mathbf{z}^*$ by flip-flopping axes while solving Equation \ref{eq:med-magma-flip-flop} \Comment{Section \ref{sec:med-magma-finding-z-star}}
        \State Approximate $\mathbf{S}_\rows \text{ and } \mathbf{S}_\cols$ using $\mathbf{\Psi}_\rows, \mathbf{\Psi}_\cols, \text{ and }\mathbf{z}^*$ with Equation \ref{eq:med-magma-pseudo-sufficient} \Comment{Section \ref{sec:med-magma-pseudo-sufficient}}
        \State $\left(\mathbf{\Psi}_\rows, \mathbf{\Psi}_\cols\right) \gets \mathrm{GmGM}\left(\mathbf{S}_\rows, \mathbf{S}_\cols\right)$ \Comment{Section \ref{sec:med-magma-performing-em}}
    \Until{$\mathbf{\Psi}_\rows \text{ and }\mathbf{\Psi}_\cols$ converge}
    \State \hspace{-8pt}\textbf{Output:} $\mathbf{\Psi}_\rows \text{ and } \mathbf{\Psi}_\cols$, which encode the conditional dependencies of the data.
\end{algorithmic}
\end{algorithm*}

\subsection{Handling non-identifiability}
\label{sec:med-magma-non-identifiability}

Our model generalizes the class of elliptical distributions, and thus cannot discriminate between $\mathbf{\Omega}$ and $r\mathbf{\Omega}$ for $r > 0$.  This is not a critical issue - we are only interested in the sparsity structure of $\mathbf{\Omega}$, not its scale, as it is the sparsity structure that encodes the conditional dependencies.  However, it must be addressed to ensure proper behavior of our algorithm.  Observe the following, for some measure of magnitude $\lVert\cdot\rVert$:

\begin{align*}
    \mathbf{x} = \left(\mathbf{r}_\cols \otimes \mathbf{r}_\rows\right) \circ \mathbf{z} = r\left(\mathbf{r}_\cols'\otimes\mathbf{r}_\rows'\right)\circ \mathbf{z} \text{ where } \lVert\mathbf{r}_\cols' \otimes \mathbf{r}_\rows'\rVert = 1
\end{align*}

This is why the non-identifiablilty arises: due to being free to choose $r$, $\mathbf{x}$ and $r\mathbf{x}$ are indistinguishable by the model, and hence so are $\mathbf{\Omega}$ and $r\mathbf{\Omega}$.  As we don't care about this difference, we can assume w.l.o.g. that $r=1$; in other words, w.l.o.g. $\lVert\mathbf{r}_\cols \otimes \mathbf{r}_\rows\rVert = 1$.  It will be convenient later if we choose $\lVert\cdot\rVert$ to be a geometric mean constraint: $\left(\prod_{i} r_{i}^\rows\right)^{\frac{1}{d_\rows}}\left(\prod_j r_j^\cols\right)^{\frac{1}{d_\cols}} = 1$.

We also have to address the non-identifiability of Kronecker products: $\mathbf{a} \otimes \mathbf{b} = c\mathbf{a} \otimes \frac{1}{c}\mathbf{b}$.  This non-identifiability is an artifact of the parameterization, and does not affect the extent to which $\mathbf{\Omega}$ can be learned.  We can simply modify our constraint to get rid of this artifact:  $\prod_{i} r_{i}^\rows = 1 = \prod_j r_j^\cols$.  Thus, we aim to fit the following:

\begin{align*}
    \mathbf{x} &= \left(\mathbf{r}_\cols \otimes \mathbf{r}_\rows\right) \circ \mathbf{z} \\ 
    \mathbf{z} &\sim \mathcal{N}\left(\mathbf{0}, \left(\mathbf{\Psi}_\cols \oplus \mathbf{\Psi}_\rows\right)^{-1}\right) \\
    \mathbf{r}_\rows,\mathbf{r}_\cols &\sim \text{any strictly positive distribution} \\
    \text{such that } & \prod_i r_i^\rows = 1 = \prod_j r_j^\cols
\end{align*}

\subsection{Removing the effect of the noise}
\label{sec:med-magma-deriving-density}

This model is a mixture of distributions (the normal distribution and any arbitrary strictly positive distribution), and thus does not directly yield a likelihood to estimate.  In fact, no likelihood will be available as long as we have such a general assumption on the $\mathbf{r}$ variables.  What we need is a function $f$ such that $f(\mathbf{x}) = f(\mathbf{y}) \iff \mathbf{x} = \left(\mathbf{r}_\cols \otimes \mathbf{r}_\rows\right) \circ \mathbf{y}$.  Such a function throws out all the information $\mathbf{x}$ provides that has become corrupted by the noise, leaving only the information upon which it is safe to do inference; we can then deduce a likelihood over $f$'s image.

It is worth thinking about the class of $d$-dimensional elliptical distributions again, in which we have $\mathbf{x} = r \mathbf{z}$.  The arbitrariness of ${r}$ ensures we can never know the magnitude of $\mathbf{z}$, but we can know its orientation.  In particular, $\frac{\mathbf{x}}{\lVert\mathbf{x}\rVert} = \frac{r\mathbf{z}}{\lVert r\mathbf{z}\rVert} = \frac{\mathbf{z}}{\lVert \mathbf{z}\rVert}$; $f(\mathbf{y}) = \frac{\mathbf{y}}{\lVert \mathbf{y} \rVert}$ maps the data to a circle.  The probability distribution at any point $\mathbf{p}$ along this circle is obtained by integrating over every point $\mathbf{y}$ such that $f(\mathbf{y})=f(\mathbf{p})$.  This space of points, denoted generically $\mathbb{F}$ (and denoted $f^{-1}(\mathbf{p})$ when considering a specific circle point $\mathbf{p}$), is called the `fiber space', and the remaining space $\mathbb{Q}$ (the circle) is the `quotient space'; $\mathbb{R}^d \cong \mathbb{F} \times \mathbb{Q}$.  We obtain the likelihood over $\mathbb{Q}$ by integrating out $\mathbb{F}$:

\begin{align*}
    \mathrm{pdf}\left(\mathbf{x}|\mathbf{\Omega}\right) &= \int_{\mathbf{y}\in f^{-1}\left(\mathbf{x}\right)} \mathrm{pdf}_\mathcal{N}\left(\mathbf{y}|\mathbf{\Omega}\right)d\mathbf{y}
\end{align*}


For the standard class of elliptical distributions, this integral is actually tractable, yielding the angular Gaussian distribution, whose MLE turns out to be Tyler's M estimator \parencite{tyler_distribution-free_1987}.  
For our distribution, the scenario is significantly more complicated, as normalizing to the unit sphere does not remove the noise.  Yet the procedure stays the same: we will produce a function $f$ that throws away unreliable information while preserving everything else, and then integrate out the fibers induced by $f$ to produce a formula for the likelihood over $\mathbb{Q}$.





\begin{align}
    f(\mathbf{X})_{ij} 
    &= \frac{x_{ij}\left|\prod_{i'j'} x_{i'j'}\right|^{\frac{1}{d_\rows d_\cols}}}{\Bigl|\prod_{j'} x_{ij'}\Bigr|^\frac{1}{d_\cols}\Bigl|\prod_{i'} x_{i'j}\Bigr|^\frac{1}{d_\rows}} \label{eq:med-magma-fiber}
\end{align}

\begin{proposition*}[Goodness of $f$]
    The function $f$ described in Equation \ref{eq:med-magma-fiber} has the following property: $f(\mathbf{X}) = f(\mathbf{Y})$ if and only if one can find strictly positive $\mathbf{r}_\rows,\mathbf{r}_\cols$ such that we have $\mathrm{vec}\left[\mathbf{X}\right] = \left(\mathbf{r}_\cols \otimes \mathbf{r}_\rows\right) \circ \mathrm{vec}\left[\mathbf{Y}\right]$.
\end{proposition*}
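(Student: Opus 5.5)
The plan is to pass to logarithmic coordinates, where $f$ turns into a linear map. Throughout I would assume that all entries of $\mathbf{X}$ and $\mathbf{Y}$ are nonzero, since otherwise Equation \ref{eq:med-magma-fiber} is undefined. This is the one genuine obstacle: the statement only makes sense on the open set of matrices with no zero entries, and I would state that restriction explicitly at the start.

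\textbf{Reduction to signs and log-magnitudes.} Write $x_{ij} = s_{ij} e^{L_{ij}}$ with $s_{ij} = \mathrm{sign}(x_{ij}) \in \{\pm 1\}$ and $L_{ij} = \log|x_{ij}|$. The correction factor in Equation \ref{eq:med-magma-fiber} is strictly positive, so $\mathrm{sign}(f(\mathbf{X})_{ij}) = s_{ij}$. Taking logarithms of absolute values gives
\[
\log|f(\mathbf{X})_{ij}| = L_{ij} - \bar L_{i\cdot} - \bar L_{\cdot j} + \bar L_{\cdot\cdot},
\]
where $\bar L_{i\cdot}$, $\bar L_{\cdot j}$ and $\bar L_{\cdot\cdot}$ are the row, column and grand means of $\mathbf{L}$. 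Hence $f(\mathbf{X})$ is determined by, and determines, the pair $(\mathbf{S}, P(\mathbf{L}))$. Here $P$ is the double-centering operator $P(\mathbf{L}) = (\mathbf{I} - \tfrac{1}{d_\rows}\mathbf{1}\mathbf{1}^T)\,\mathbf{L}\,(\mathbf{I} - \tfrac{1}{d_\cols}\mathbf{1}\mathbf{1}^T)$.

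\textbf{Kernel of the linear map.} $P$ is linear, and I would show that its kernel is exactly the space $\{a_i + b_j\}$ of additive row-plus-column matrices.
\begin{itemize}
\item Inclusion $\supseteq$: substitute $L_{ij} = a_i + b_j$ into the formula above; every term cancels.
\item Inclusion $\subseteq$: if $P(\mathbf{L}) = 0$, then $L_{ij} = \bar L_{i\cdot} + (\bar L_{\cdot j} - \bar L_{\cdot\cdot})$, which is already of the form $a_i + b_j$.
\end{itemize}
Equivalently, $P$ is the orthogonal projector onto the complement of the range of $(\mathbf{a},\mathbf{b}) \mapsto \mathbf{a}\mathbf{1}^T + \mathbf{1}\mathbf{b}^T$.

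\textbf{Both directions.}
\begin{itemize}
\item ($\Leftarrow$) Suppose $\mathrm{vec}[\mathbf{X}] = (\mathbf{r}_\cols \otimes \mathbf{r}_\rows) \circ \mathrm{vec}[\mathbf{Y}]$. Because $\mathrm{vec}$ stacks columns, this says $x_{ij} = r^\rows_i r^\cols_j y_{ij}$. Positivity of the $r$'s gives equal signs, and $L^X_{ij} - L^Y_{ij} = \log r^\rows_i + \log r^\cols_j$ lies in $\ker P$. Therefore $f(\mathbf{X}) = f(\mathbf{Y})$. This can also be checked directly: the factors $r_i^\rows$ and $r_j^\cols$ cancel between the numerator and the two denominators of Equation \ref{eq:med-magma-fiber}.
\item ($\Rightarrow$) Suppose $f(\mathbf{X}) = f(\mathbf{Y})$. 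Then the signs agree, and $P(\mathbf{L}^X - \mathbf{L}^Y) = 0$. By the kernel characterization, $L^X_{ij} - L^Y_{ij} = a_i + b_j$. Setting $r^\rows_i = e^{a_i} > 0$ and $r^\cols_j = e^{b_j} > 0$ yields $x_{ij} = r^\rows_i r^\cols_j y_{ij}$ with equal signs, which is the required relation after vectorizing.
\end{itemize}

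As a remark, the $\mathbf{r}$'s can also be normalized to satisfy $\prod_i r^\rows_i = 1 = \prod_j r^\cols_j$ by shifting a constant between $\mathbf{a}$ and $\mathbf{b}$, up to an overall scale. That overall scale is precisely the non-identifiability already discussed in Section \ref{sec:med-magma-non-identifiability}.
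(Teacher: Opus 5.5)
Your proof is correct and follows the same route as the paper: both pass to log-magnitudes, write $f(\mathbf{X})=\mathrm{sgn}(\mathbf{X})\circ\exp^\circ\bigl((\mathbf{I}-\tfrac{1}{d_{\rows}}\mathbf{1}\mathbf{1}^T)\log^\circ|\mathbf{X}|(\mathbf{I}-\tfrac{1}{d_{\cols}}\mathbf{1}\mathbf{1}^T)\bigr)$, and identify the kernel of this double-centering with the additive row-plus-column matrices produced by the noise. The only difference is how you get ``kernel $\subseteq$ noise'': the paper uses a dimension count ($d_{\rows}+d_{\cols}-1$ zero eigenvalues versus $d_{\rows}+d_{\cols}-1$ constrained noise parameters), whereas your identity $L_{ij}=\bar L_{i\cdot}+(\bar L_{\cdot j}-\bar L_{\cdot\cdot})$ gives it directly, which avoids the paper's unstated assumption that those parameters span independent log-directions and makes the sign-matching in the ($\Rightarrow$) direction explicit.
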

\begin{proof}
    See Appendix \ref{apx:med-magma-f-corrupted}.
\end{proof}

\paragraph{Handling zeros} Many datasets, such as most scRNA-seq datasets, contain zeros.  Equation \ref{eq:med-magma-fiber} as written is undefined for such zeros.  We present a zero-handling modification of $f$ in Appendix \ref{apx:med-magma-f-zeros}.  From here onwards, we will assume that $f$ works with zeros.

\subsection{Performing expectation maximization}
\label{sec:med-magma-performing-em}

Now that we have $f$, we can express the likelihood of our distribution as:

\begin{align*}
\mathrm{pdf}\left(\mathbf{x}|\mathbf{\Omega}\right) &= \int_{\mathbf{z}\in f^{-1}\left(\mathbf{x}\right)} \mathrm{pdf}_\mathcal{N}\left(\mathbf{z}|\mathbf{\Omega}\right)d\mathbf{z} 
\hspace{5pt}\propto\hspace{5pt} \sqrt{\mathrm{det}\text{ }\mathbf{\Omega}}\int_{\mathbf{z}\in f^{-1}\left(\mathbf{x}\right)}  e^{-\frac{1}{2}\mathbf{z}^T\mathbf{\Omega}\mathbf{z}}d\mathbf{z}
\end{align*}

Sadly, this integral is likely intractable due to the complicated structure of $f^{-1}$.  This motivates the use of the EM algorithm, frequently used in settings with latent variables such as this.  Letting $\mathcal{K}_{++}^\oplus$ be the space of positive definite Kronecker-sum-separable matrices (i.e. the space of valid $\mathbf{\Omega}$), the EM algorithm is as follows:

\begin{align*}
    \text{let } \mathbf{y} &= f(\mathbf{x}) = f(\mathbf{z}) \text{ be our denoised observed variable}\\
    \text{let } Q\left(\mathbf{\Omega}\Bigl|\mathbf{\Omega}_{(t)}\right) &= \mathbb{E}_{\mathbf{z}\sim\mathcal{N}\left(\mathbf{0},\mathbf{\Omega}_{(t)}^{-1}\right)\text{ and } \mathbf{z}\in f^{-1}(\mathbf{y})}\left[\log \mathrm{pdf}_\mathcal{N}(\mathbf{z} | \mathbf{\Omega})\right]\\
    \mathbf{\Omega}_{(t+1)} &= \underset{\mathbf{\Omega}\in\mathcal{K}_{++}^\oplus}{\mathrm{argmax}}\hspace{3pt} Q\left(\mathbf{\Omega}\Bigl|\mathbf{\Omega}_{(t)}\right) \tag{iterate until convergence}
\end{align*}

Thus we have reduced our problem to the (admittedly intimidating) problem of maximizing $Q$.

\begin{proposition*}[Maximizing $Q$]
    Optimizing $Q$ corresponds to \textbf{the same optimization problem as GmGM}.
    
    GmGM uses the sufficient statistics $\mathbf{X}\mathbf{X}^T$ and $\mathbf{X}\mathbf{X}^T$.  To maximize $Q$, we replace these with:
    
    \[\mathbf{S}^\rows = \mathbb{E}_{\begin{matrix}\scriptscriptstyle\mathbf{z}\sim\mathcal{N}\left(\mathbf{0},\mathbf{\Omega}_{(t)}^{-1}\right)\\\scriptscriptstyle\mathbf{z}\in f^{-1}(\mathbf{y})\end{matrix}}\left[\mathbf{Z}\mathbf{Z}^T\right] \text{ and } \mathbf{S}^\cols = \mathbb{E}_{\begin{matrix}\scriptscriptstyle\mathbf{z}\sim\mathcal{N}\left(\mathbf{0},\mathbf{\Omega}_{(t)}^{-1}\right)\\\scriptscriptstyle\mathbf{z}\in f^{-1}(\mathbf{y})\end{matrix}}\left[\mathbf{Z}^T\mathbf{Z}\right]\]

    More generally, any method to fit unregularized Kronecker-sum-structured models \textbf{without noise} that operates on the sufficient statistics can be used in place of GmGM.
\end{proposition*}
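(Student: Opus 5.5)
The approach is to write out $\log \mathrm{pdf}_\mathcal{N}(\mathbf{z}|\mathbf{\Omega})$ explicitly for $\mathbf{\Omega} = \mathbf{\Psi}_\cols \oplus \mathbf{\Psi}_\rows$, take the conditional expectation term by term, and observe that only a quadratic form in $\mathbf{z}$ depends on the data. Up to an additive constant independent of $\mathbf{\Omega}$,
\[
\log \mathrm{pdf}_\mathcal{N}(\mathbf{z}|\mathbf{\Omega}) = \tfrac{1}{2}\log\det\left(\mathbf{\Psi}_\cols \oplus \mathbf{\Psi}_\rows\right) - \tfrac{1}{2}\mathbf{z}^T\left(\mathbf{\Psi}_\cols \oplus \mathbf{\Psi}_\rows\right)\mathbf{z}.
\]
The log-determinant term is deterministic in $\mathbf{z}$, so it passes through the expectation in $Q$ unchanged. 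The whole question therefore reduces to computing the expectation of the quadratic form.

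The key step is to expand the Kronecker sum. With $\mathbf{z} = \mathrm{vec}[\mathbf{Z}]$ and the identity $(\mathbf{A}\otimes\mathbf{B})\mathrm{vec}[\mathbf{Z}] = \mathrm{vec}[\mathbf{B}\mathbf{Z}\mathbf{A}^T]$, I will show
\[
\mathbf{z}^T\left(\mathbf{\Psi}_\cols\otimes\mathbf{I} + \mathbf{I}\otimes\mathbf{\Psi}_\rows\right)\mathbf{z} = \mathrm{tr}\left[\mathbf{\Psi}_\cols\mathbf{Z}^T\mathbf{Z}\right] + \mathrm{tr}\left[\mathbf{\Psi}_\rows\mathbf{Z}\mathbf{Z}^T\right].
\]
This follows from $\mathrm{vec}[\mathbf{A}]^T\mathrm{vec}[\mathbf{B}] = \mathrm{tr}[\mathbf{A}^T\mathbf{B}]$ together with the symmetry of the $\mathbf{\Psi}$'s.

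Trace and expectation are both linear, and $\mathbf{\Psi}_\rows, \mathbf{\Psi}_\cols$ are constants with respect to the conditional distribution of $\mathbf{z}$ (which depends only on $\mathbf{\Omega}_{(t)}$ and $\mathbf{y}$). So the expectation moves inside the traces, giving
\[
Q\left(\mathbf{\Omega}\Bigl|\mathbf{\Omega}_{(t)}\right) = \tfrac{1}{2}\log\det\left(\mathbf{\Psi}_\cols\oplus\mathbf{\Psi}_\rows\right) - \tfrac{1}{2}\mathrm{tr}\left[\mathbf{\Psi}_\rows\mathbf{S}^\rows\right] - \tfrac{1}{2}\mathrm{tr}\left[\mathbf{\Psi}_\cols\mathbf{S}^\cols\right] + \text{const}.
\]
The noise-free Kronecker-sum Gaussian log-likelihood with data $\mathbf{X}$ has exactly this form, with $\mathbf{S}^\rows, \mathbf{S}^\cols$ replaced by $\mathbf{X}\mathbf{X}^T$ and $\mathbf{X}^T\mathbf{X}$. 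Maximizing over $\mathcal{K}_{++}^\oplus$ is therefore literally the GmGM problem with these substituted statistics.

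The generalization follows from the same observation. Any method that maximizes the noise-free unregularized likelihood and accesses the data only through $\mathbf{X}\mathbf{X}^T$ and $\mathbf{X}^T\mathbf{X}$ solves the same objective when handed $\mathbf{S}^\rows, \mathbf{S}^\cols$.

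The main obstacle is a well-posedness check. The conditional expectations must exist and be finite, and $\mathbf{S}^\rows, \mathbf{S}^\cols$ must be positive semidefinite; otherwise the GmGM problem could be unbounded. Positive semidefiniteness is immediate, since each $\mathbf{S}$ is an expectation of Gram matrices. Finiteness holds because the conditional measure on a fiber is the restriction of a Gaussian with finite second moments, so I will simply remark that the expectation is dominated by Gaussian second moments. A second subtlety is that the fiber $f^{-1}(\mathbf{y})$ is a measure-zero set. The conditional expectation must therefore be read as expectation under the fiber-restricted density used to define the likelihood. I will state this interpretation explicitly so that linearity of expectation applies.
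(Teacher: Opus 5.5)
Your proof is correct. It applies linearity of the conditional expectation to the objective itself, whereas the paper applies it to the gradient.

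The paper's route is as follows. It differentiates $Q$ in $\mathbf{\Psi}_\rows$ and exchanges $\nabla$ and $\mathbb{E}$ by dominated convergence. It then inserts GmGM's gradient formula, pulls the expectation onto $\mathbf{Z}\mathbf{Z}^T$ to obtain $\mathbf{S}_\rows$, and concludes ``by inspection'' that the maximizer is $\mathrm{GmGM}(\mathbf{S}_\rows,\mathbf{S}_\cols)$.

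You instead use $\mathbf{z}^T(\mathbf{\Psi}_\cols\oplus\mathbf{\Psi}_\rows)\mathbf{z}=\mathrm{tr}[\mathbf{\Psi}_\rows\mathbf{Z}\mathbf{Z}^T]+\mathrm{tr}[\mathbf{\Psi}_\cols\mathbf{Z}^T\mathbf{Z}]$. This shows that $Q(\cdot\mid\mathbf{\Omega}_{(t)})$ coincides with the noise-free log-likelihood after substituting $\mathbf{S}^\rows,\mathbf{S}^\cols$, up to an additive constant and on all of $\mathcal{K}_{++}^\oplus$. This buys you several things:
\begin{itemize}
\item It is more elementary, and it states exactly what ``same optimization problem'' means: same objective and same feasible set, hence same maximizers. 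There is no step from matching stationarity conditions to matching optimizers.
\item It needs only finiteness of $\mathbf{S}^\rows,\mathbf{S}^\cols$, not a dominating function for the derivative interchange.
\item The clause about arbitrary sufficient-statistic-based solvers becomes immediate.
\item The explicit expansion fixes which statistic pairs with which factor. Under $\mathbf{\Omega}=\mathbf{\Psi}_\cols\otimes\mathbf{I}+\mathbf{I}\otimes\mathbf{\Psi}_\rows$, the $\mathbf{\Psi}_\rows$-derivative involves $\mathbf{I}\otimes\mathbf{J}^{ij}$, not the $\mathbf{J}^{ij}\otimes\mathbf{I}$ written in the paper.
\end{itemize}
The paper's route has the merit of being phrased directly in the first-order conditions through which it defines GmGM.

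Two small points on your well-posedness remarks, which go beyond what the proposition claims.
\begin{itemize}
\item Positive semidefiniteness of $\mathbf{S}^\rows,\mathbf{S}^\cols$ is necessary but not sufficient for a bounded problem. If $\mathbf{S}^\cols\mathbf{v}=\mathbf{0}$, replacing $\mathbf{\Psi}_\cols$ by $\mathbf{\Psi}_\cols+t\mathbf{v}\mathbf{v}^T$ keeps both trace terms fixed, while the log-determinant diverges as $t\to\infty$.
\item Ambient Gaussian second moments do not by themselves control an integral over the null set $f^{-1}(\mathbf{y})$. A clean argument parametrizes the fiber by zero-sum log-scales $(\mathbf{u},\mathbf{v})$. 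When $\mathbf{y}$ has no zero entries, $\max_{ij}(u_i+v_j)$ grows linearly in $\lVert(\mathbf{u},\mathbf{v})\rVert$. Hence $e^{-\frac{1}{2}\mathbf{z}^T\mathbf{\Omega}_{(t)}\mathbf{z}}$ decays doubly exponentially, and the required second moments are finite.
\end{itemize}
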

\begin{proof}
    The proof is given in Appendix \ref{apx:med-magma-q-gmgm}.
\end{proof}

This is excellent: GmGM is a fast algorithm.  All that remains is to evaluate the expectations.

\subsection{Evaluating $\mathbf{S}^\rows \text{ and } \mathbf{S}^\cols$ with Laplace's approximation}
\label{sec:med-magma-pseudo-sufficient}

Let $g(\mathbf{z}) \text{ be either } \mathbf{Z}\mathbf{Z}^T \text{ or } \mathbf{Z}^T\mathbf{Z}$; the logic behind evaluating the expectation is the same.  We can express the expectation as follows:

\begin{align}
    \mathbb{E}_{\begin{matrix}\scriptscriptstyle\mathbf{z}\sim\mathcal{N}\left(\mathbf{0},\mathbf{\Omega}_{(t)}^{-1}\right)\\\scriptscriptstyle\mathbf{z}\in f^{-1}(\mathbf{y})\end{matrix}}\left[g(\mathbf{z})\right] 
    &= \frac{\int_{\mathbf{z}\in f^{-1}\left(\mathbf{y}\right)} g(\mathbf{z})e^{-\frac{1}{2}\mathbf{z}^T\mathbf{\Omega}_{(t)}\mathbf{z}} d\mathbf{z}}{\int_{\mathbf{z}\in f^{-1}\left(\mathbf{y}\right)} e^{-\frac{1}{2}\mathbf{z}^T\mathbf{\Omega}_{(t)}\mathbf{z}} d\mathbf{z}} \label{eq:med-magma-pseudo-sufficient}
\end{align}

It may appear that we have swapped one hard problem for another - we were trying to avoid a nasty integral-over-fiber in Section \ref{sec:med-magma-performing-em}, yet now we have a ratio of them!  However, this ratio can be well-approximated by Laplace's method 
\parencite{tierney_accurate_1986}.  The intuition behind Laplace's method is that the integral is dominated by the exponential term, whose maximum occurs at $\mathbf{z}^{*}$ (in other words, the minimum of $\mathbf{z}^T\mathbf{\Omega}_{(t)}\mathbf{z}$ occurs at $\mathbf{z}^{*}$).  By focusing on local approximations about $\mathbf{z}^{*}$, we can well-approximate the integral in the region where it has the most mass.  Letting $\mathbf{P}$ be the matrix that projects onto the tangent space about $\mathbf{z}^*$ in the fiber, Laplace's method tells us that:

\begin{align*}
    \frac{\int_{\mathbf{z}\in f^{-1}\left(\mathbf{y}\right)} g(\mathbf{z})e^{-\frac{1}{2}\mathbf{z}^T\mathbf{\Omega}_{(t)}\mathbf{z}} d\mathbf{z}}{\int_{\mathbf{z}\in f^{-1}\left(\mathbf{y}\right)} e^{-\frac{1}{2}\mathbf{z}^T\mathbf{\Omega}_{(t)}\mathbf{z}} d\mathbf{z}} &\approx g\left(\mathbf{z}^{*}\right) + \frac{1}{2}\mathrm{tr}\left[\left(\mathbf{P}\mathbf{\Omega}_{(t)}\mathbf{P}^T\right)^{-1}\mathbf{P}\left(\nabla^2 g(\mathbf{z}^*)\right)\mathbf{P}^T\right]
\end{align*}

The trace term above looks intimidating, but in practice it can be evaluated efficiently; we show how in Appendix \ref{apx:med-magma-trace-term}.  Thus, all that remains is to find the point $\mathbf{z}^{*} \in f^{-1}(\mathbf{y})$ that minimizes $\mathbf{z}^T\mathbf{\Omega}_{(t)}\mathbf{z}$.

\subsection{Finding $\mathbf{z}^{*}$ with quadratic programming}
\label{sec:med-magma-finding-z-star}

In order to approximate $\mathbf{S}^\rows \text{ and } \mathbf{S}^\cols$ statistics, we need to solve the following optimization problem:

\begin{align*}
    \mathbf{z}^* &= \underset{\mathbf{z}\in f^{-1}(\mathbf{y})}{\mathrm{argmin}}\hspace{2pt} \mathbf{z}^T\mathbf{\Omega}_{(t)}\mathbf{z} \\
    &= \underset{\begin{matrix}\mathbf{r}_\rows, \mathbf{r}_\cols > 0 \\ \prod_i r^{\rows}_i = 1 = \prod_j r^{\cols}_j \end{matrix}}{\mathrm{argmin}}\hspace{2pt} \sum_{ik} r^\rows_i \left(\sum_{j\ell}r^\cols_j Y_{ij} \Omega^{(t)}_{(ij),(k\ell)} r^\cols_\ell Y_{k\ell}\right) r^\rows_k
\end{align*}

This is equivalent to minimizing an order-4 polynomial with constraints - not a trivial task at all.  However, our problem naturally lends itself to a flip-flop algorithm over two quadratic sub-problems; the terms in the parentheses may be aggregated into a $d_\rows \times d_\rows$ matrix $\Tilde{\mathbf{\Omega}}$ not depending on $\mathbf{r}_\rows$.  This is a quadratic form in $\mathbf{r}_\rows$, and so for fixed $\mathbf{r}_\cols$ we can reduce our problem:



\begin{align}
    \mathrm{argmin}\hspace{2pt} \mathbf{r}_\rows^T \Tilde{\mathbf{\Omega}} \mathbf{r}_\rows \text{ such that } \mathbf{r}_\rows > 0 \text{ and } \prod_i r^{\rows}_i = 1\label{eq:med-magma-flip-flop}
\end{align}

This is still non-convex\footnote{We did experiment with a sum-to-one constraint, which would be convex; it performed well on synthetic data but not real data, due to the constraint not respecting the geometry of the problem nor the structure of $f$.} due to the nonlinear multiply-to-one constraint, but in practice is efficiently optimizable.  An analogous formulation holds for $\mathbf{r}_\cols$; we can flip-flop between them to find the optimal $\mathbf{r}_\rows, \mathbf{r}_\cols$, and hence $\mathbf{z}^*$.




\section{Results}
\label{sec:med-magma-results}

In this section, we evaluate our method, MED-MAGMA, in three ways: by comparing it to GmGM on synthetic data (Section \ref{sec:med-magma-synthetic}), comparing it to GmGM systematically across all public datasets of a certain type (Section \ref{sec:med-magma-systemic}), and finally by exploring a specific example (Section \ref{sec:med-magma-specific}).  The code to run these experiments is available on GitHub (\href{https://github.com/MED-MAGMA-NeurIPS2026/MED-MAGMA-NeurIPS2026}{https://github.com/MED-MAGMA-NeurIPS2026/MED-MAGMA-NeurIPS2026}) and our model is available on PyPI (\texttt{pip install MED-MAGMA}).

\subsection{Synthetic validation}
\label{sec:med-magma-synthetic}

\begin{figure}[h!]
    \centering
    \includegraphics[width=0.9\linewidth]{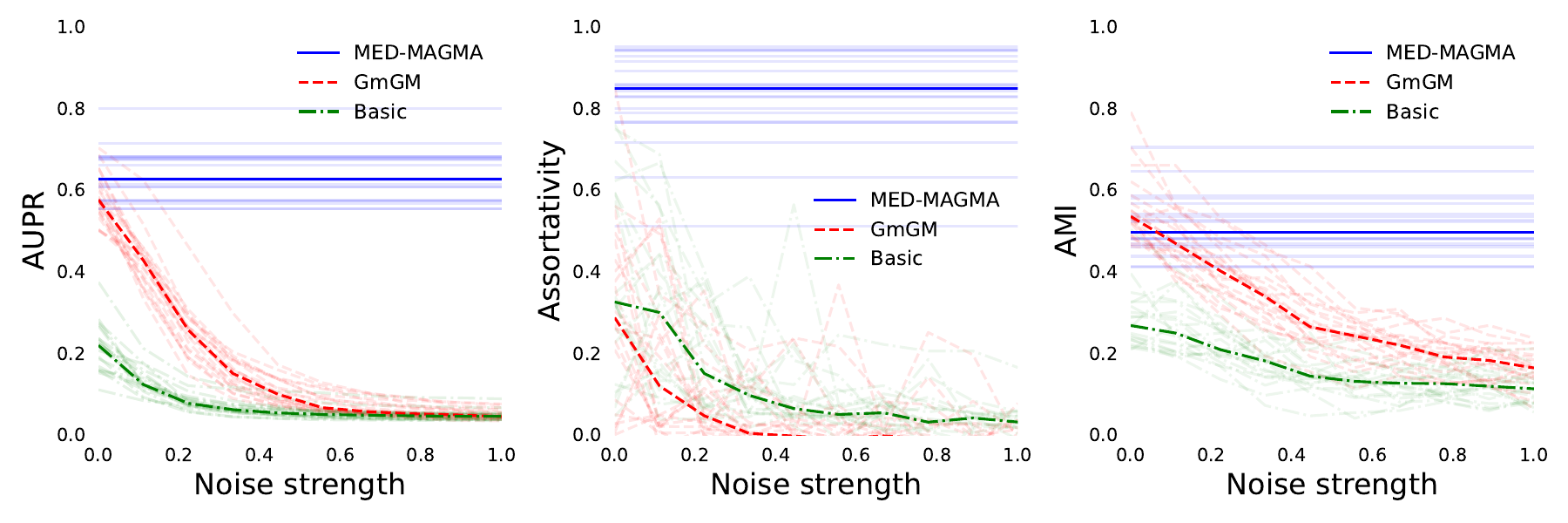}
    \vspace{-14pt}
    \caption{(Both) Results across 20 synthetic Kronecker-sum-structured Gaussian $100 \times 150$ datasets corrupted by varying levels of multiplicative noise.  Median result is highlighted.  As noise strength varies, we plot the area under PR curves (left), assortativity (center), and AMI (right).}
    \label{fig:med-magma-pr-aupr}
\end{figure}

To validate our method on synthetic data, we first generated ground truth graphs from the Barabasi-Albert distribution, and converted these to positive definite matrices $\mathbf{\Psi}^\rows$ and $\mathbf{\Psi^\cols}$.  We then generated 20 samples from $\mathcal{N}\left(\mathbf{0}, \left(\mathbf{\Psi}_\cols \oplus \mathbf{\Psi}_\rows\right)^{-1}\right)$.  To corrupt the samples with multiplicative noise, we generated each $\mathbf{r}_{(\ell)}$ variable as chi-squared (df=1) and raised them to the power $\alpha \in [0, 1]$; $\alpha$ represents the noise strength ($\alpha=0$ yields no noise).  For two of our metrics (assortativity and AMI), comparison requires a ground truth clustering - we used Leiden clustering \parencite{traag_louvain_2019} with resolution 1 on the ground truth graphs to create this.  For AMI, we also need a clustering on the learned graphs - we measured AMI using Leiden clustering at various resolutions (from 0.02 to 2) and recorded the result with the best AMI.  The generated datasets presented here are $100\times 150$ matrices; in Appendix \ref{apx:med-magma-synthetic-bonus} we show results on $200\times 200$ matrices.

We compare MED-MAGMA and GmGM on this synthetic dataset in Figure \ref{fig:med-magma-pr-aupr}, using the metrics of precision and recall, assortativity, and AMI.  Assortativity measures the tendency of vertices of the same type to be connected together, whereas AMI measures the agreement between two clusterings.  Intuitively, these three metrics measure the ability of methods to capture structure at different scales, with precision and recall measuring the recovery of specific edges, AMI measuring the recovery of clusterings, and assortativity sitting in between.  We also compare the results against a baseline single-axis precision-matrix estimator (the MLE, which for a dataset $\mathbf{X}$ is just $\left(\mathbf{X}\mathbf{X}^T\right)^{-1}$).  We can see that MED-MAGMA is unaffected by noise strength, whereas GmGM and our baseline are quite severely affected.

\subsection{Systematic validation}
\label{sec:med-magma-systemic}

To validate MED-MAGMA on real data, we systematically downloaded every dataset in the Single Cell Expression Atlas less than a thousand cells which contained a categorical variable and referenced a source publication.  There were 30 such datasets in total; we list these in Appendix \ref{apx:med-magma-datasets}.  By evaluating our methodology in this manner, we ensure that we evaluate over a representative sample of all scRNA-seq datasets.

The datasets received minimal preprocessing, described in Appendix \ref{apx:med-magma-preprocessing}.  Each dataset contained a categorical variable, typically `cell type'.  For comparison, we used both assortativity and adjusted mutual information \parencite{maan_characterizing_2024}.  We thresholded the learned graphs to keep only the top $k$ edges per vertex, with $k$ varying from 1 to 40.  We used Leiden clustering, varying the resolution parameter from $0.02$ to $2.0$ to obtain clusterings of various sizes - we only report the best AMI across all clusterings and thresholds.  We compared against both standard GmGM, GmGM equipped with the nonparanormal skeptic, and our single-axis baseline; Figure \ref{fig:med-magma-systematic} demonstrates that assortativity and AMI are higher in MED-MAGMA than the other methods, with or without the nonparanormal skeptic.

\begin{figure}[h!]
    \centering
    \includegraphics[width=0.8\linewidth]{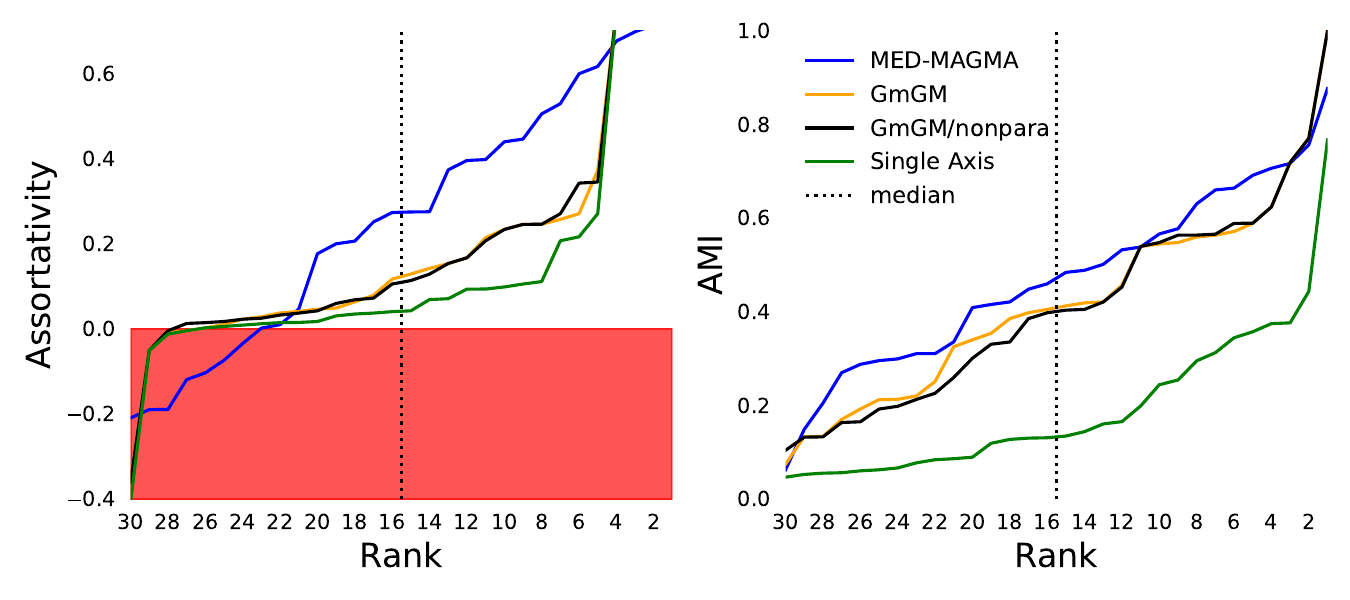}
    \vspace{-14pt}
    \caption{(Both) The datasets are ordered by increasing performance along the x-axis.  GmGM/nonpara refers to GmGM equipped with the nonparanormal skeptic.  The median performance is highlighted.  (Left) Assortativities below 0 represent a statistical tendency for cells of \textit{different} categories to connect, which indicates the network does capture information about the dataset, but it is harder to interpret.}
    \label{fig:med-magma-systematic}
\end{figure}

\subsection{Specific evaluation}
\label{sec:med-magma-specific}

Both prior sections serve to show that MED-MAGMA yields better results than GmGM.  Thus, models that take into account multiplicative noise perform better on scRNA-seq data.  In this section, we briefly pick one of the datasets from earlier and show that the cell and gene networks are robust, and that the gene network learned is also reasonable.  In particular, we pick the `median dataset' when ranking datasets by assortativity.  By picking the median dataset, it gives us a generic view into what we can expect the average performance of our methodology to be.

The chosen dataseet, E-MTAB-7249, contains 185 cells from a patient with high-grade serous ovarian carcinoma (HGSOC) \parencite{nelson_living_2020}.  The source paper investigated the biology of HGSOC through several modalities of data from multiple patients.  Here, we are using the portion of this dataset available on the Single Cell Expression Atlas, comprised only of the cells of patient `OCM.38a'.  The data from the atlas had 7 clusters representing cell subtypes: SA-a, SA-b, SB, TA-a, TA-b, TB, and a final small `N/A' cluster of unlabeled cells.  From the source paper, it is clear that these are subtypes of the following cell types: stromal (SA-a, SA-b, SB) and neoplastic (TA-a, TA-b, TB).

\begin{figure}[h!]
    \centering
    \includegraphics[width=\linewidth]{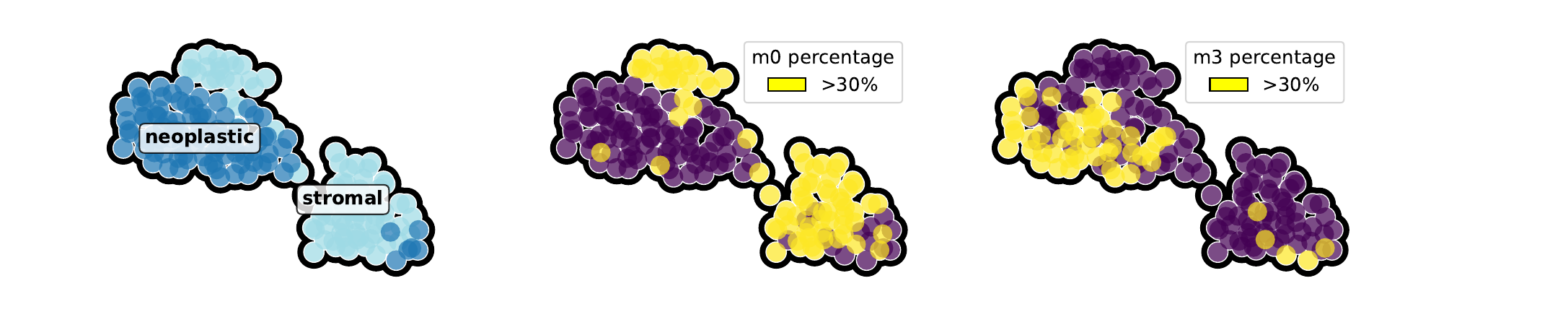}
    \vspace{-18pt}
    \caption{(All) A UMAP plot of E-MTAB-7249 cells based on gene expression.  (Left) Cell type. (Middle, Right) The percentage of gene expression from genes in modules m0 and m3, respectively.}
    \label{fig:med-magma-umap}
\end{figure}

We clustered the learned gene network (Leiden, resolution=1), yielding 8 `modules'.  Figure \ref{fig:med-magma-umap} shows that two of our gene modules strongly correspond to stromal and neoplastic cells.  To quantify the relation between our gene modules and the cell subtypes, we performed a one-way ANOVA testing whether the percentage expression of each gene module differed across cell subtypes.  All modules showed significant variation between clusters ($p=2\times 10^{-29} \text{ to } 1\times 10^{-2}$) except one ($p=0.11$), indicating that MED-MAGMA learned gene modules capturing processes associated with specific cell subtypes.

For one final test, we evaluated the robustness of our cell and gene networks using Robin \parencite{policastro_robustness_2021}.  Robin measures the tendency for clusterings on a network to change after an increasing proportion of the edges are randomly perturbed: networks whose clusterings change significantly after a small clustering are not robust.  Robin outputs Bayes factors; our cell network has factor of 108 and our gene network has a factor of 59, indicating very strong evidence that clusterings on MED-MAGMA's networks are robust to perturbations.

\section{Discussion and limitations}
\label{sec:med-magma-discussion}

We have developed a model that allows multi-axis estimation under multiplicative noise, and an algorithm to fit the model.  Such a model is important; many datasets, such as scRNA-seq datasets, are affected by this type of noise.  It also significantly increases the flexibility of multi-axis models; we did not make a Gaussian copula assumption, allowing our model to be suitable for datasets that contain more complicated modes of dependency (such as tail dependence).  We demonstrated that our methodology outperforms prior work on synthetic data as well as through a systematic evaluation of all public scRNA-seq datasets below a certain size on a public database.

Despite the promising results, our model does have limitations.  Currently, our method can handle matrices with thousands of rows/columns, but would struggle with tens of thousands.  Both the approximation of $\left(\mathbf{S}_\rows, \mathbf{S}_\cols\right)$ and the GmGM sub-problem dominate the runtime, so improvements in either of those would be significant.  Additionally, elliptical-style distributions such as ours (despite their generality) still make some tangible assumptions about the dependencies in the dataset - namely that dependencies are linear and radially symmetric.  It would be a worthwhile future direction to further weaken the assumptions made here.


\begin{ack}
    Bailey Andrew is supported by the UKRI Engineering and Physical Sciences Research Council (EPSRC) [EP/S024336/1].  David Westhead is supported in part by the National Institute for Health and Care Research (NIHR) Leeds Biomedical Research Centre (BRC) (NIHR203331). The views expressed are those of the author(s) and not necessarily those of the NHS, the NIHR or the Department of Health and Social Care.
\end{ack}


\printbibliography

\appendix

\section{Synthetic results for $200\times 200$ datasets}
\label{apx:med-magma-synthetic-bonus}

\begin{figure}[h!]
    \centering
    \includegraphics[width=0.9\linewidth]{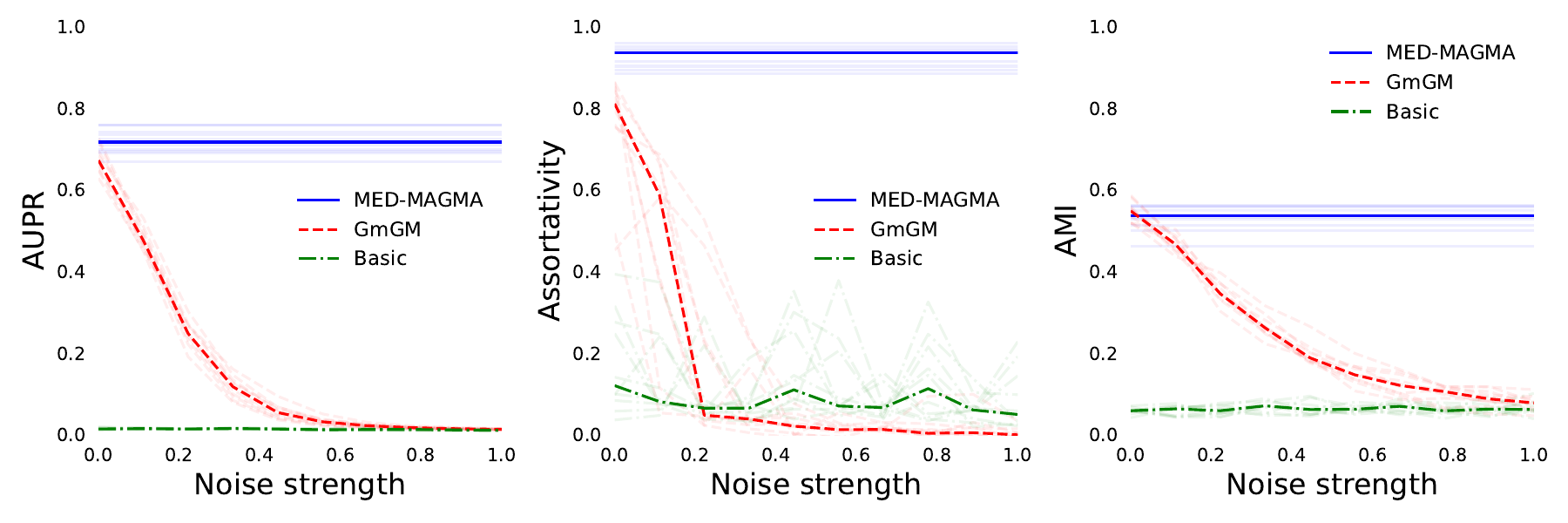}
    \vspace{-14pt}
    \caption{(Both) Results across 20 synthetic Kronecker-sum-structured Gaussian $200 \times 200$ datasets corrupted by varying levels of multiplicative noise.  Median result is highlighted.  As noise strength varies, we plot the area under PR curves (left), assortativity (center), and AMI (right).}
    \label{fig:med-magma-pr-aupr-appendix}
\end{figure}

We give the results for $200\times 200$ synthetic datasets, generated the same way as in Section \ref{sec:med-magma-synthetic}, in Figure \ref{fig:med-magma-pr-aupr-appendix}.  This was omitted from the main paper for space reasons; we include it here to demonstrate that our method works regardless of whether the data is lopsided or more square.  The basic single-axis baseline performs poorly on this test; this is because such methods work best when one has sufficiently more samples relative to features, whereas here we have equal samples/features.

\section{Proof that $f$ removes only the corrupted directions}
\label{apx:med-magma-f-corrupted}


In this section, we will demonstrate that our definition of $f$ (Equation \ref{eq:med-magma-fiber}) removes all directions corrupted by noise from our data, and nothing else.

\begin{proposition*}[Goodness of $f$]
    The function $f$ described in Equation \ref{eq:med-magma-fiber} has the following property: $f(\mathbf{X}) = f(\mathbf{Y})$ if and only if one can find strictly positive $\mathbf{r}_\rows,\mathbf{r}_\cols$ such that we have $\mathrm{vec}\left[\mathbf{X}\right] = \left(\mathbf{r}_\cols \otimes \mathbf{r}_\rows\right) \circ \mathrm{vec}\left[\mathbf{Y}\right]$.
\end{proposition*}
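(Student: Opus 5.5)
We work entry-wise on matrices with all entries nonzero, so that $f$ is defined. The problem is multiplicative, so we pass to logarithms of absolute values and keep track of signs separately. Write $L_{ij} = \log|x_{ij}|$. Then
\[
\log|f(\mathbf{X})_{ij}| = L_{ij} - \bar L_{i\cdot} - \bar L_{\cdot j} + \bar L_{\cdot\cdot},
\]
where the bars denote row, column and grand means. Also $\mathrm{sign}(f(\mathbf{X})_{ij}) = \mathrm{sign}(x_{ij})$, because the remaining factors in Equation \ref{eq:med-magma-fiber} are all positive. So $f$ acts as the pair (sign pattern, double-centering of $L$). The double-centering map $C(L) = (\mathbf{I}-\tfrac{1}{d_\rows}\mathbf{1}\mathbf{1}^T)\, L\, (\mathbf{I}-\tfrac{1}{d_\cols}\mathbf{1}\mathbf{1}^T)$ is linear and idempotent.

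\emph{If direction.} Suppose $x_{ij} = r^\rows_i r^\cols_j y_{ij}$ with $r>0$. Then the signs agree, and $L^X_{ij} = L^Y_{ij} + a_i + b_j$ with $a_i = \log r^\rows_i$ and $b_j = \log r^\cols_j$. A direct check, or the observation that both projectors annihilate the constant vector, shows that $C$ kills every matrix of the form $a\mathbf{1}^T + \mathbf{1}b^T$. Hence $C(L^X) = C(L^Y)$, and therefore $f(\mathbf{X}) = f(\mathbf{Y})$. Equivalently, one can substitute into Equation \ref{eq:med-magma-fiber} directly: the factor $r^\rows_i$ cancels between the numerator's $x_{ij}$ and the row product, $r^\cols_j$ cancels against the column product, and the grand product cancels the geometric means of the remaining $r$'s.

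\emph{Only if direction.} Suppose $f(\mathbf{X}) = f(\mathbf{Y})$. Equal signs of $f$ give equal signs of $x_{ij}$ and $y_{ij}$, so $x_{ij}/y_{ij} > 0$. Equal moduli give $C(L^X - L^Y) = 0$. The key step is to show that the kernel of $C$ is exactly $\{a\mathbf{1}^T + \mathbf{1}b^T\}$. We already know this set is contained in the kernel. For the converse, suppose $C(D) = 0$ and set $a_i = \bar D_{i\cdot}$ and $b_j = \bar D_{\cdot j} - \bar D_{\cdot\cdot}$. Then
\[
D_{ij} - a_i - b_j = C(D)_{ij} = 0 .
\]
Now take $r^\rows_i = e^{a_i}$ and $r^\cols_j = e^{b_j}$, which are strictly positive. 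This gives $|x_{ij}| = r^\rows_i r^\cols_j |y_{ij}|$, and together with the matching signs, $x_{ij} = r^\rows_i r^\cols_j y_{ij}$. In vectorized form this is $\mathrm{vec}[\mathbf{X}] = (\mathbf{r}_\cols \otimes \mathbf{r}_\rows) \circ \mathrm{vec}[\mathbf{Y}]$; the Kronecker ordering matches the column-stacking convention for $\mathrm{vec}$.

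\emph{Main obstacle.} The argument as given relies on nonzero entries, since the logarithms and Equation \ref{eq:med-magma-fiber} are otherwise undefined. The statement should therefore be read on that domain, with zeros deferred to the modification in Appendix \ref{apx:med-magma-f-zeros}. The other point needing care is the sign bookkeeping: it must be shown explicitly that the denominator factors and the grand geometric mean are positive, so that $f$ preserves signs.
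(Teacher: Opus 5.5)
Your proof is correct. It shares the paper's reduction: pass to $\log^\circ|\mathbf{X}|$, recognise $f$ as the sign pattern times $\exp^\circ$ of the double-centred log-moduli, and check that row and column log-effects are annihilated. Where you differ is the key step of the converse, identifying the kernel of the double-centering.

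The paper argues by counting. The matrix $\left[\mathbf{I} - \frac{1}{d_\cols}\mathbf{1}\mathbf{1}^T\right]\otimes\left[\mathbf{I} - \frac{1}{d_\rows}\mathbf{1}\mathbf{1}^T\right]$ has exactly $d_\rows + d_\cols - 1$ zero eigenvalues. The noise (a global scale $r$ plus $\mathbf{r}_\rows, \mathbf{r}_\cols$ under the two geometric-mean constraints) has $d_\rows + d_\cols - 1$ parameters and lies in that nullspace, so it fills it. This tacitly requires the parametrisation to be injective, so that the noise directions really span a subspace of that dimension. It also leaves implicit that equal values of $f$ force equal signs.

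You instead write any kernel element explicitly as $D_{ij} = \bar D_{i\cdot} + (\bar D_{\cdot j} - \bar D_{\cdot\cdot})$. This needs no spectral facts and gives you the witnesses $\mathbf{r}_\rows, \mathbf{r}_\cols$ directly. It also matches the statement as written, with no normalisation and no separate scalar. That works because $a\mathbf{1}^T + \mathbf{1}b^T$ is annihilated for arbitrary $a, b$, so the mean-zero constraint the paper invokes at that step is superfluous.

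In return, the paper's spectral phrasing gives an explicit split of $\log^\circ|\mathbf{x}|$ into discarded (eigenvalue zero) and retained (eigenvalue one) directions. That linear-in-logspace picture of the fibre is what the paper reuses when constructing the tangent-space projector $\mathbf{P}$ in Appendix~\ref{apx:med-magma-trace-term}.
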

\begin{proof}

Recall the following direction of $f$:

\begin{align}
    f(\mathbf{X})_{ij} 
    &= \frac{x_{ij}\left|\prod_{i'j'} x_{i'j'}\right|^{\frac{1}{d_\rows d_\cols}}}{\Bigl|\prod_{j'} x_{ij'}\Bigr|^\frac{1}{d_\cols}\Bigl|\prod_{i'} x_{i'j}\Bigr|^\frac{1}{d_\rows}} \tag{Equation \ref{eq:med-magma-fiber}}
\end{align}

Note that this has the following equivalent, more complicated formulation:

\begin{align*}
    f(\mathbf{X})_{ij} &= \mathrm{sgn}\left(x_{ij}\right) \mathrm{exp}(& \\
    &\hspace{10pt}\log \left|x_{ij}\right| + \frac{1}{d_\rows d_\cols}\sum_{i'j'}\log\left|x_{i'j'}\right| - \frac{1}{d_\rows}\sum_{i'}\log\left|x_{i'j}\right| - \frac{1}{d_\cols}\sum_{j'}\log\left|x_{ij'}\right|\\
    ) \\
\end{align*}

Letting $\mathbf{L} = \log^\circ \left|\mathbf{X}\right|$ be the elementwise logarithm of the absolute values of $\mathbf{X}$, we can see that the core of our function is ultimately just a linear function on $\mathbf{L}$, $\mathcal{L}$:

\begin{align*}
    \mathcal{L}(\mathbf{L})_{ij} &= \ell_{ij} + \mathrm{mean}(\mathbf{L}) - \mathrm{row\text{ }means}\left(\mathbf{L}\right)_i - \mathrm{col\text{ }means}\left(\mathbf{L}\right)_j
\end{align*}

This is a well-known function (double-centering of a matrix) and it is easy to express it in matrix form:

\begin{align*}
    \mathcal{L}(\mathbf{L}) &= \mathbf{L} + \frac{\mathbf{1}^T\mathbf{L}\mathbf{1}}{d_\cols d_\rows}\mathbf{1}\mathbf{1}^T - \left(\frac{1}{d_\cols}\mathbf{L}\mathbf{1}\right)\mathbf{1}^T - \mathbf{1}\left(\frac{1}{d_\rows}\mathbf{1}^T\mathbf{L}\right) \\
    &= \mathbf{L} - \mathbf{L}\left(\frac{1}{d_\cols}\mathbf{1}\mathbf{1}^T\right) - \left(\frac{1}{d_\rows}\mathbf{1}\mathbf{1}^T\right)\mathbf{L} + \left(\frac{1}{d_\rows}\mathbf{1}\mathbf{1}^T\right)\mathbf{L}\left(\frac{1}{d_\cols}\mathbf{1}\mathbf{1}^T\right) \\
    &= \mathbf{L}\left(\mathbf{I} - \frac{1}{d_\cols}\mathbf{1}\mathbf{1}^T\right) + \left(\frac{1}{d_\rows}\mathbf{1}\mathbf{1}^T\right)\mathbf{L}\left(\mathbf{I} - \frac{1}{d_\cols}\mathbf{1}\mathbf{1}^T\right) \\
    &= \left(\mathbf{I} - \frac{1}{d_\rows}\mathbf{1}\mathbf{1}^T\right)\mathbf{L}\left(\mathbf{I} - \frac{1}{d_\cols}\mathbf{1}\mathbf{1}^T\right)
\end{align*}

In other words:

\begin{align*}
    f(\mathbf{X}) &= \mathrm{sgn}\left(\mathbf{X}\right) \circ \exp^\circ\left(\left[\mathbf{I} - \frac{1}{d_\rows}\mathbf{1}\mathbf{1}^T\right]\log^\circ\left|\mathbf{X}\right|  \left[\mathbf{I} - \frac{1}{d_\cols}\mathbf{1}\mathbf{1}^T\right] \right)\\
    f(\mathbf{x})&= \mathrm{sgn}\left(\mathbf{x}\right) \circ \exp^\circ\left(\left(\left[\mathbf{I} - \frac{1}{d_\cols}\mathbf{1}\mathbf{1}^T\right]  \otimes\left[\mathbf{I} - \frac{1}{d_\rows}\mathbf{1}\mathbf{1}^T\right]\right)\log^\circ\left|\mathbf{x}\right| \right) \tag{where $\mathbf{x}$ = $\mathrm{vec}\left[\mathbf{X}\right]$}\\
\end{align*}

The benefit of expressing our function in this form is that we can fully describe what information is lost and kept by the eigenvectors of $\left[\mathbf{I} - \frac{1}{d_\cols}\mathbf{1}\mathbf{1}^T\right]  \otimes\left[\mathbf{I} - \frac{1}{d_\rows}\mathbf{1}\mathbf{1}^T\right]$.  This is the Kronecker product of two centering matrices - the $n\times n$ centering matrix has a single zero eigenvalue, with the remaining $n-1$ eigenvalues being 1.  The eigenvalues of a Kronecker product are all the products of the eigenvalues, giving $d_\rows + d_\cols - 1$ zero eigenvalues (the rest being one).  If we can show that our noise gets mapped to zero and spans a $\left(d_\rows + d_\cols - 1\right)$-dimensional subspace, we will have proven that our function works.

\begin{align*}
    \log^\circ | r\left(\mathbf{r}_\cols \otimes \mathbf{r}_\rows\right)\circ \mathbf{x}| &= \log^\circ \mathbf{r}_\cols \oplus \log^\circ \mathbf{r}_\rows + \log^\circ \left|\mathbf{x}\right| + \log(r)\mathbf{1} \tag{Noise is linear in logspace}
\end{align*}
\begin{align*}
    \left(\left[\mathbf{I} - \frac{1}{d_\cols}\mathbf{1}\mathbf{1}^T\right]  \otimes\left[\mathbf{I} - \frac{1}{d_\rows}\mathbf{1}\mathbf{1}^T\right]\right)&\left(\log^\circ \mathbf{r}_\cols \oplus \log^\circ \mathbf{r}_\rows + \log(r)\mathbf{1}\right) \\
    &= \left(\left[\mathbf{I} - \frac{1}{d_\cols}\mathbf{1}\mathbf{1}^T\right]\log^\circ \mathbf{r}_\cols\right) \otimes \left[\mathbf{I} - \frac{1}{d_\rows}\mathbf{1}\mathbf{1}^T\right] \\
    &+ \left[\mathbf{I} - \frac{1}{d_\cols}\mathbf{1}\mathbf{1}^T\right] \otimes \left(\left[\mathbf{I} - \frac{1}{d_\rows}\mathbf{1}\mathbf{1}^T\right]\log^\circ\mathbf{r}_\rows\right) \\
    &+ \log(r) \left[\mathbf{I} - \frac{1}{d_\cols}\mathbf{1}\mathbf{1}^T\right]\mathbf{1} \otimes \left[\mathbf{I} - \frac{1}{d_\rows}\mathbf{1}\mathbf{1}^T\right]\mathbf{1}
\end{align*}

Our constraint that $\prod_i \mathbf{r}^{(\ell)}_i = 1$ is equivalent to $\sum_i \log \mathbf{r}^{(\ell)}_i = 0$.  In other words, in logspace our vectors have mean zero, and are thus annihilated by the centering matrix.  The whole equation then becomes zero: our noise is indeed removed by our function.

Finally, note that we parameterize $r, \mathbf{r}_\rows\text{ and }\mathbf{r}_\cols$ with $1+d_\rows+d_\cols$ parameters.  However, our two geometric mean constraints further decrease the dimensionality; our noise is parameterized by $d_\rows + d_\cols - 1$ parameters.  This is exactly the size of the nullspace of our double-centering matrix, guaranteeing that only noise-corrupted directions are removed from $\mathbf{x}$.  This completes the proof

\end{proof}

\section{A version of $f$ that handles zero values}
\label{apx:med-magma-f-zeros}

\begin{figure}[h!]
    \centering
    \includegraphics[width=0.6\linewidth]{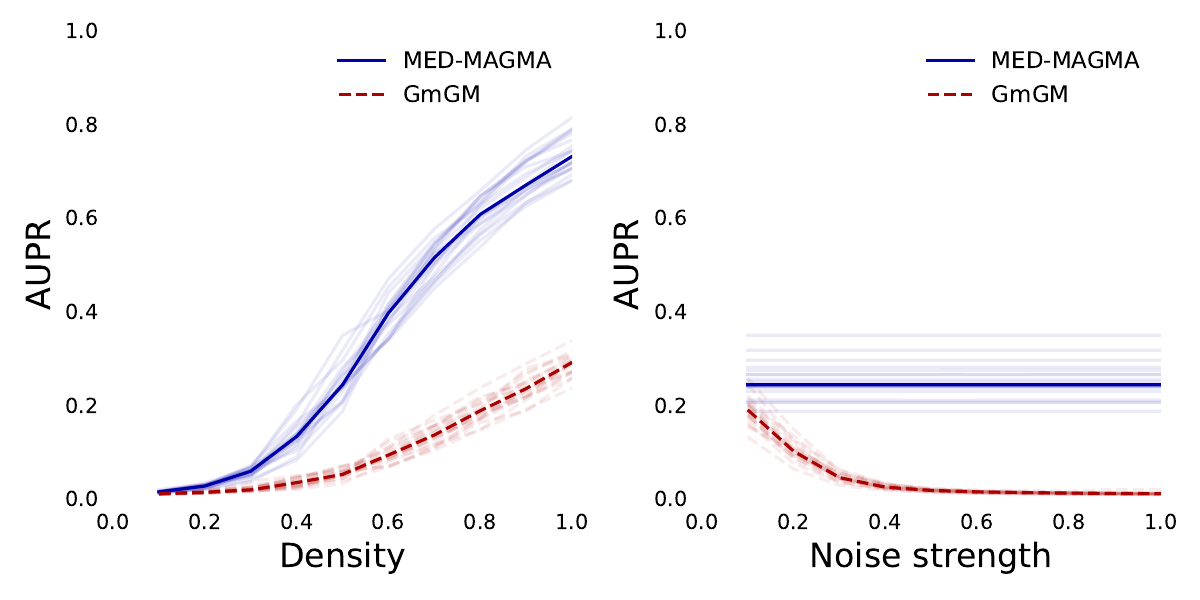}
    \vspace{-12pt}
    \caption{(Both) Area under PR curves across 20 synthetic $200\times 201$ Kronecker-sum-structured Gaussian datasets corrupted by varying levels of multiplicative noise and sparsity.  Median result is highlighted.  (Left) As percentage of nonzero entries vary (noise strength 0.2)  (Right) Area under PR curves as noise strength varies (density 0.5).}
    \label{fig:med-magma-sparsity}
\end{figure}

When the input $\mathbf{X}$ contains zeros, the function $f$ breaks - either one must divide by zero or take the logarithm of it.  However, zeros are unaffected by multiplicative noise; intuitively, they should not cause such a catastrophic failure.  We use the obvious modification (with Figure \ref{fig:med-magma-sparsity} demonstrating the performance): we will skip over the zero entries when doing products.  Let $\mathrm{nnz}$ be a function that counts the number of nonzero entries in a vector, and $\prod^{\neq 0}$ be a product over only nonzero terms; we have the following: 

\begin{align*}
    f(\mathbf{X})_{ij} 
    &= \frac{x_{ij}\left|\prod_{i'j'}^{\neq 0} x_{i'j'}\right|^{\frac{1}{\mathrm{nnz}\left(\mathbf{x}\right)}}}{\Bigl|\prod_{j'}^{\neq 0} x_{ij'}\Bigr|^\frac{1}{\mathrm{nnz}\left(\mathbf{x}_{\mathrm{row}\text{ }i}\right)}\Bigl|\prod_{i'}^{\neq 0} x_{i'j}\Bigr|^\frac{1}{\mathrm{nnz}\left(\mathbf{x}_{\mathrm{col}\text{ }j}\right)}}
\end{align*}


\section{Proof that optimizing $Q$ is equivalent to GmGM}
\label{apx:med-magma-q-gmgm}

Recall the problem of optimizing $Q$:

\begin{align*}
    \text{let } Q\left(\mathbf{\Omega}\Bigl|\mathbf{\Omega}_{(t)}\right) &= \mathbb{E}_{\mathbf{z}\sim\mathcal{N}\left(\mathbf{0},\mathbf{\Omega}_{(t)}^{-1}\right)\text{ and } \mathbf{z}\in f^{-1}(\mathbf{y})}\left[\log \mathrm{pdf}_\mathcal{N}(\mathbf{z} | \mathbf{\Omega})\right]\\
    \mathbf{\Omega}_{(t+1)} &= \underset{\mathbf{\Omega}\in\mathcal{K}_{++}^\oplus}{\mathrm{argmax}}\hspace{3pt} Q\left(\mathbf{\Omega}\Bigl|\mathbf{\Omega}_{(t)}\right) \tag{iterate until convergence}
\end{align*}

As claimed in the main paper, maximization of $Q$ requires solving the same optimization problem as GmGM.  Below, let $\mathbf{J}^{ij}$ be the matrix of zeros except for the $(i, j)$ entry being one.

\begin{definition*}[GmGM \parencite{andrew_gmgm_2024}]
    GmGM solves the convex minimization problem of minimizing the negative log likelihood, with gradients:
    
    \[
    -\nabla_{\mathbf{\Psi}_\rows}\log \mathrm{pdf}_\mathcal{N}(\mathbf{z} | \mathbf{\Omega}) = \mathbf{X}\mathbf{X}^T - \mathrm{tr}\left[\mathbf{\Omega}^{-1}\left(\mathbf{J}^{ij} \otimes \mathbf{I}\right)\right]_{ij}
    \]
    \[
    -\nabla_{\mathbf{\Psi}_\cols}\log \mathrm{pdf}_\mathcal{N}(\mathbf{z} | \mathbf{\Omega}) =\mathbf{X}^T\mathbf{X} - \mathrm{tr}\left[\mathbf{\Omega}^{-1}\left(\mathbf{I} \otimes \mathbf{J}^{ij}\right)\right]_{ij}
    \]

    It only relies on the data through the terms $\mathbf{X}\mathbf{X}^T$ and $\mathbf{X}^T\mathbf{X}$; we will denote minimum point as $\mathrm{GmGM}\left(\mathbf{X}\mathbf{X}^T, \mathbf{X}^T\mathbf{X}\right)$.
\end{definition*}

\begin{proposition*}[Maximizing $Q$]
    Optimizing $Q$ corresponds to \textbf{the same optimization problem as GmGM}.
    
    GmGM uses the sufficient statistics $\mathbf{X}\mathbf{X}^T$ and $\mathbf{X}\mathbf{X}^T$.  To maximize $Q$, we replace these with:
    
    \[\mathbf{S}^\rows = \mathbb{E}_{\begin{matrix}\scriptscriptstyle\mathbf{z}\sim\mathcal{N}\left(\mathbf{0},\mathbf{\Omega}_{(t)}^{-1}\right)\\\scriptscriptstyle\mathbf{z}\in f^{-1}(\mathbf{y})\end{matrix}}\left[\mathbf{Z}\mathbf{Z}^T\right] \text{ and } \mathbf{S}^\cols = \mathbb{E}_{\begin{matrix}\scriptscriptstyle\mathbf{z}\sim\mathcal{N}\left(\mathbf{0},\mathbf{\Omega}_{(t)}^{-1}\right)\\\scriptscriptstyle\mathbf{z}\in f^{-1}(\mathbf{y})\end{matrix}}\left[\mathbf{Z}^T\mathbf{Z}\right]\]

    More generally, any method to fit unregularized Kronecker-sum-structured models \textbf{without noise} that operates on the sufficient statistics can be used in place of GmGM.
\end{proposition*}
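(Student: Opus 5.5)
The plan is to write out $\log \mathrm{pdf}_\mathcal{N}(\mathbf{z}|\mathbf{\Omega})$ for the Kronecker-sum parameterization. Then I push the expectation through, using linearity, and observe that the data enter only through two matrices, which are exactly $\mathbf{S}^\rows$ and $\mathbf{S}^\cols$. Concretely, up to an additive constant independent of $\mathbf{\Omega}$,
\[
\log \mathrm{pdf}_\mathcal{N}(\mathbf{z}|\mathbf{\Omega}) = \tfrac{1}{2}\log\det\mathbf{\Omega} - \tfrac{1}{2}\mathbf{z}^T\mathbf{\Omega}\mathbf{z}.
\]
With $\mathbf{\Omega} = \mathbf{\Psi}_\cols\oplus\mathbf{\Psi}_\rows = \mathbf{\Psi}_\cols\otimes\mathbf{I} + \mathbf{I}\otimes\mathbf{\Psi}_\rows$ and $\mathbf{z}=\mathrm{vec}[\mathbf{Z}]$, I use the standard identity $\mathrm{vec}[\mathbf{Z}]^T(\mathbf{A}\otimes\mathbf{B})\mathrm{vec}[\mathbf{Z}] = \mathrm{tr}[\mathbf{Z}^T\mathbf{B}\mathbf{Z}\mathbf{A}^T]$. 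This rewrites the quadratic form as
\[
\mathbf{z}^T\mathbf{\Omega}\mathbf{z} = \mathrm{tr}\left[\mathbf{\Psi}_\rows\mathbf{Z}\mathbf{Z}^T\right] + \mathrm{tr}\left[\mathbf{\Psi}_\cols\mathbf{Z}^T\mathbf{Z}\right],
\]
using symmetry of the $\mathbf{\Psi}$'s.

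Next I take the conditional expectation defining $Q$. The term $\log\det\mathbf{\Omega}$ does not depend on $\mathbf{z}$, so it passes through unchanged. Trace is linear, and the conditional distribution does not depend on the optimization variable $\mathbf{\Omega}$ (only on the fixed $\mathbf{\Omega}_{(t)}$ and $\mathbf{y}$). Hence the expectation moves inside each trace, giving
\[
Q(\mathbf{\Omega}|\mathbf{\Omega}_{(t)}) = \tfrac{1}{2}\log\det(\mathbf{\Psi}_\cols\oplus\mathbf{\Psi}_\rows) - \tfrac{1}{2}\mathrm{tr}[\mathbf{\Psi}_\rows\mathbf{S}^\rows] - \tfrac{1}{2}\mathrm{tr}[\mathbf{\Psi}_\cols\mathbf{S}^\cols] + \text{const}.
\]
This has exactly the form of the noiseless Kronecker-sum Gaussian log-likelihood, $\tfrac12\log\det\mathbf{\Omega} - \tfrac12\mathrm{tr}[\mathbf{\Psi}_\rows\mathbf{X}\mathbf{X}^T] - \tfrac12\mathrm{tr}[\mathbf{\Psi}_\cols\mathbf{X}^T\mathbf{X}]$, with $\mathbf{X}\mathbf{X}^T$ and $\mathbf{X}^T\mathbf{X}$ replaced by $\mathbf{S}^\rows$ and $\mathbf{S}^\cols$. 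As a check against the Definition, differentiating in $\mathbf{\Psi}_\rows$ gives $\mathbf{S}^\rows - \mathrm{tr}[\mathbf{\Omega}^{-1}(\mathbf{J}^{ij}\otimes\mathbf{I})]_{ij}$, up to the usual factor and ordering conventions. These are GmGM's gradients with the sufficient statistics swapped.

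Both objectives share the same feasible set $\mathcal{K}^\oplus_{++}$, and the $\mathbf{\Omega}$-dependent parts are identical as functions of the statistics. So $\mathrm{argmax}\, Q = \mathrm{GmGM}(\mathbf{S}^\rows,\mathbf{S}^\cols)$. The general claim follows for the same reason: any noiseless unregularized solver that touches the data only through these two statistics optimizes the same function.

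The main obstacle is legitimacy rather than algebra. GmGM's derivation presumes its inputs have the form $\mathbf{X}\mathbf{X}^T$ (and, in GmGM's case, exploits a shared eigendecomposition of $\mathbf{\Omega}$ with the statistics). I therefore need $\mathbf{S}^\rows,\mathbf{S}^\cols$ to be symmetric positive semidefinite and, for a finite maximizer, sufficiently nondegenerate. Symmetry and positive semidefiniteness follow because an expectation of matrices $\mathbf{Z}\mathbf{Z}^T$ is an average of PSD matrices. Nondegeneracy I would assume, just as GmGM does in the noiseless case. A secondary point is the index and transpose conventions of the Kronecker sum: the paper uses $\mathbf{\Psi}_\cols\otimes\mathbf{I}$ together with column-stacking $\mathrm{vec}$, and I will verify carefully that $\mathbf{\Psi}_\rows$ pairs with $\mathbf{Z}\mathbf{Z}^T$ rather than $\mathbf{Z}^T\mathbf{Z}$.
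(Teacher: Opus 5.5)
Your proof is correct, but it works at the level of the objective, whereas the paper works at the level of the gradient. The paper differentiates $Q$ in $\mathbf{\Psi}_{\rows}$, moves the gradient inside the conditional expectation via dominated convergence, and substitutes the gradient formula from its Definition of GmGM. It then pulls the expectation through to obtain GmGM's gradient with $\mathbf{S}^{\rows}$ in place of $\mathbf{X}\mathbf{X}^T$, and concludes ``by inspection'' that the maximizers coincide.

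You instead rewrite $\mathbf{z}^T\mathbf{\Omega}\mathbf{z}=\mathrm{tr}[\mathbf{\Psi}_{\rows}\mathbf{Z}\mathbf{Z}^T]+\mathrm{tr}[\mathbf{\Psi}_{\cols}\mathbf{Z}^T\mathbf{Z}]$ and take the expectation of the log-density itself. This shows $Q$ is, up to an additive constant, exactly the noiseless Kronecker-sum log-likelihood with the statistics swapped. That buys you three things.
\begin{itemize}
\item Only linearity of expectation is needed, with no interchange of differentiation and integration. Finiteness of $\mathbf{S}^{\rows},\mathbf{S}^{\cols}$ is required on either route.
\item Equality of the maximizers over $\mathcal{K}^\oplus_{++}$ is immediate. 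The paper's version rests on the unstated step from ``the gradients agree everywhere'' to ``the objectives differ by a constant''.
\item The clause about any noiseless solver that uses only the sufficient statistics follows directly, since it is a statement about the objective rather than about GmGM's particular gradient.
\end{itemize}

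Your explicit computation also settles the pairing question you raised. With $\mathbf{\Omega}=\mathbf{\Psi}_{\cols}\otimes\mathbf{I}+\mathbf{I}\otimes\mathbf{\Psi}_{\rows}$ and column-stacking $\mathrm{vec}$, $\mathbf{\Psi}_{\rows}$ pairs with $\mathbf{Z}\mathbf{Z}^T$. The $\mathbf{\Psi}_{\rows}$-derivative of $\log\det\mathbf{\Omega}$ therefore involves $\mathbf{I}\otimes\mathbf{J}^{ij}$, while the paper's Definition writes $\mathbf{J}^{ij}\otimes\mathbf{I}$. That is a convention slip your objective-level argument never touches.

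The paper's route is shorter, since it simply reuses GmGM's already-derived gradient. Your remarks on symmetry, positive semidefiniteness, and nondegeneracy of $\mathbf{S}^{\rows},\mathbf{S}^{\cols}$ (needed to hand them to GmGM and to have a finite maximizer) are extra care the paper omits.
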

\begin{proof}
    We will first show that the gradient w.r.t. $\mathbf{\Psi}_\rows$ yields a very similar gradient to GmGM; the argument is analogous for $\mathbf{\Psi}_\cols$.  Note also that $Q$ is being maximized, whereas GmGM frames the problem as a minimization problem, so we expect the gradients to have opposite signs.

    \begin{align*}
        \nabla_{\mathbf{\Psi}_\rows} Q\left(\mathbf{\Omega}\Bigl|\mathbf{\Omega}_{(t)}\right) &= \nabla_{\mathbf{\Psi}_\rows}\mathbb{E}_{\mathbf{z}\sim\mathcal{N}\left(\mathbf{0},\mathbf{\Omega}_{(t)}^{-1}\right)\text{ and } \mathbf{z}\in f^{-1}(\mathbf{y})}\left[\log \mathrm{pdf}_\mathcal{N}(\mathbf{z} | \mathbf{\Omega})\right] \\
        &= \mathbb{E}_{\mathbf{z}\sim\mathcal{N}\left(\mathbf{0},\mathbf{\Omega}_{(t)}^{-1}\right)\text{ and } \mathbf{z}\in f^{-1}(\mathbf{y})}\left[\nabla_{\mathbf{\Psi}_\rows}\log \mathrm{pdf}_\mathcal{N}(\mathbf{z} | \mathbf{\Omega})\right] \tag{By dominated convergence theorem} \\
        &= \mathbb{E}_{\mathbf{z}\sim\mathcal{N}\left(\mathbf{0},\mathbf{\Omega}_{(t)}^{-1}\right)\text{ and } \mathbf{z}\in f^{-1}(\mathbf{y})}\left[ \mathrm{tr}\left[\mathbf{\Omega}^{-1}\left(\mathbf{J}^{ij} \otimes \mathbf{I}\right)\right]_{ij} - \mathbf{Z}\mathbf{Z}^T\right] \tag{GmGM definition} \\
        &= \mathrm{tr}\left[\mathbf{\Omega}^{-1}\left(\mathbf{J}^{ij} \otimes \mathbf{I}\right)\right]_{ij} - \mathbb{E}_{\mathbf{z}\sim\mathcal{N}\left(\mathbf{0},\mathbf{\Omega}_{(t)}^{-1}\right)\text{ and } \mathbf{z}\in f^{-1}(\mathbf{y})}\left[ \mathbf{Z}\mathbf{Z}^T\right] \\ 
        &= \mathrm{tr}\left[\mathbf{\Omega}^{-1}\left(\mathbf{J}^{ij} \otimes \mathbf{I}\right)\right]_{ij} - \mathbf{S}_\rows \tag{$\mathbf{S}_\rows$ definition}
    \end{align*}

    It should now be clear by inspection that the maximum of $Q$ corresponds with $\mathrm{GmGM}\left(\mathbf{S}_\rows, \mathbf{S}_\cols\right)$.  This completes the proof.
    
\end{proof}

\section{Efficient evaluation of $\frac{1}{2}\mathrm{tr}\left[\left(\mathbf{P}\mathbf{\Omega}_{(t)}\mathbf{P}^T\right)^{-1}\mathbf{P}\left(\nabla^2 g(\mathbf{z}^*)\right)\mathbf{P}^T\right]$}
\label{apx:med-magma-trace-term}

We will assume $g(\mathbf{z}) = \mathbf{Z}\mathbf{Z}^T$, as $g(\mathbf{z}) = \mathbf{Z}^T\mathbf{Z}$ follows analogously.  As in the previous section, let $\mathbf{J}_{(ij)}$ be the matrix of zeros except for the $(i, j)$ entry being one.  Let us start by deriving the Hessian of $g$, by investigating the directional derivative:

\begin{align*}
    \left(\nabla_{\mathbf{J}_{(k\ell)}}\nabla_{\mathbf{J}_{(ij)}}\right) \hspace{2pt} \mathbf{Z}\mathbf{Z}^T &= \nabla_{\mathbf{J}_{(k\ell)}} \left(\mathbf{Z}\mathbf{J}_{(ij)}^T+\mathbf{J}_{(ij)}\mathbf{Z}^T\right) \\
    &= \mathbf{J}_{(k\ell)}\mathbf{J}_{(ij)}^T + \mathbf{J}_{(ij)}\mathbf{J}_{(k\ell)}^T \\
    &= \mathbf{J}_{(k\ell)}\mathbf{J}_{(ji)} + \mathbf{J}_{(ij)}\mathbf{J}_{(\ell k)} \\
    &= \delta_{j\ell}\left(\mathbf{J}_{(ki)} + \mathbf{J}_{(ik)}\right) \\
    \left[\left(\nabla_{\mathbf{J}_{(k\ell)}}\nabla_{\mathbf{J}_{(ij)}}\right) \hspace{2pt} \mathbf{Z}\mathbf{Z}^T\right]_{ab} &= \delta_{j\ell}\left(\delta_{ak}\delta_{bi} + \delta_{ai}\delta_{bk}\right) \\
    \nabla^2 \mathrm{vec}\left[\mathbf{Z\mathbf{Z}^T}\right]&= \mathbf{I}_{d_\cols} \otimes \left(\mathbf{J}^{(ab)}_{d_\rows} + \mathbf{J}^{(ba)}_{d_\rows}\right) \\
\end{align*}

We will now work out the matrix $\mathbf{P}$ which projects onto the tangent space of the fiber.  As we have seen in Section \ref{apx:med-magma-f-corrupted}, the fiber is linear-in-logspace with $d_\rows + d_\cols - 1$ basis vectors of the form $\mathbf{v}_\cols \oplus \mathbf{v}_\rows$ in which $\mathbf{1}^T\mathbf{v}_\rows= 0 = \mathbf{1}^T\mathbf{v}_\cols$.  Note then that:
\begin{align*}
    \begin{bmatrix}\frac{1}{d_\rows}\mathbf{I}\otimes \mathbf{1}^T \\ \frac{1}{d_\cols}\mathbf{1}^T\otimes \mathbf{I}\end{bmatrix}\left(\mathbf{v}_\cols \oplus \mathbf{v}_\rows\right) &= \begin{bmatrix}\mathbf{v}_\cols + \frac{\mathbf{1}^T\mathbf{v}_\rows}{d_\rows}\mathbf{1} \\ d_\rows \mathbf{v}_\rows + \frac{\mathbf{1}^T\mathbf{v}_\cols}{d_\cols}\mathbf{1}\end{bmatrix} \\
    &= \begin{bmatrix}\mathbf{v}_\cols \\ \mathbf{v}_\rows\end{bmatrix}
\end{align*}

Thus $\mathbf{P} = \begin{bmatrix}\frac{1}{d_\rows}\mathbf{I}\otimes \mathbf{1}^T \\ \frac{1}{d_\cols}\mathbf{1}^T\otimes \mathbf{I}\end{bmatrix}$ projects onto a $d_\rows + d_\cols$ space, from which the tangent space can be reached by just dropping the last row from the matrix.  It should be clear that $\mathbf{P}\nabla^2g(\mathbf{z})\mathbf{P}^T$ is easy, as all matrices involved are conformable block matrices of Kronecker products.  Note that the $\frac{1}{d_\cols}\text{ and }\frac{1}{d_\rows}$ factors cancel out in the trace expression, so we will drop them: $\mathbf{P} = \begin{bmatrix}\mathbf{I}\otimes \mathbf{1}^T \\ \mathbf{1}^T\otimes \mathbf{I}\end{bmatrix}$.

All that remains is to efficiently find the inverse of $\mathbf{P}\mathbf{\Omega}_{(t)}\mathbf{P}^T$.  Note the following:

\begin{align*}
    \mathbf{P}\mathbf{\Omega}_{(t)}\mathbf{P}^T &= \begin{bmatrix}
        \mathbf{\Psi}_\cols \otimes \mathbf{1}^T + \mathbf{I} \otimes \mathbf{1}^T\mathbf{\Psi}_\rows \\ \mathbf{1}^T\mathbf{\Psi}_\cols \otimes \mathbf{I} + \mathbf{1}^T \otimes \mathbf{\Psi}_\rows
    \end{bmatrix}\mathbf{P}^T \\
    &= \begin{bmatrix}
        d_\rows\mathbf{\Psi}_\cols + \left(\mathbf{1}^T\mathbf{\Psi}_\rows\mathbf{1}\right)\mathbf{I} & 
        \mathbf{\Psi}_\cols\mathbf{1} \otimes \mathbf{1}^T + \mathbf{1} \otimes \mathbf{1}^T\mathbf{\Psi}_\rows\\
        \mathbf{1}^T\mathbf{\Psi}_\cols \otimes \mathbf{1} + \mathbf{1}^T \otimes \mathbf{\Psi}_\rows\mathbf{1}& d_\cols\mathbf{\Psi}_\rows + \left(\mathbf{1}^T\mathbf{\Psi}_\cols\mathbf{1}\right)\mathbf{I}
    \end{bmatrix}
\end{align*}

Removing the last row of $\mathbf{P}$ amounts to removing the last row and column of this matrix.  It can easily be inverted by the block matrix inversion formula, requiring inversions of only $d_\ell \times d_\ell$ matrices rather than a $d_\rows d_\cols \times d_\rows d_\cols$ matrix.  Thus, our trace term can be evaluated efficiently.  From this point on, all steps require standard (if tedious) linear algebra, which we leave to the reader.  Our code contains a working implementation for those who wish to reference it.

\section{Description of the systematic approach to preprocessing and evaluating datasets}
\label{apx:med-magma-preprocessing}

As mentioned in the main paper, we systematically downloaded every dataset in the Single Cell Expression Atlas, subject to three criteria (below).  One of the main difficulties in evaluating methods is the ease in which authors can (intentionally or unintentionally) end up cherry-picking examples that make their model look good.  By being systematic, we can guarantee that our results represent the typical outputs of our model.

\paragraph{Criterion 1.} The dataset must be at most 1000 cells large.  We chose this threshold as it was large enough to include many datasets (40 before applying the remaining criteria) while still including only datasets small enough for our algorithm to run quickly on.  It is also a nice round number.  The next `nice round number', in our opinion, is 5000 - which would be 63 datasets.  As our algorithm scales cubically, the largest datasets would take 125 times as long as before.

\paragraph{Criterion 2.} The dataset must contain at least one categorical variable.  This is necessary to have a ground truth to compare against for AMI and assortativity.  When there were multiple, we picked the first one we noticed; these are given in Appendix \ref{apx:med-magma-datasets}.

\paragraph{Criterion 3.} The dataset must clearly reference a source paper in the atlas.  This rule was for our convenience; we did not want to waste time hunting down a paper to cite for the dataset.

30 datasets satisfied all criteria.  We then had to design a standard preprocessing regime that worked for all datasets, choosing to do the minimal preprocessing necessary to get the algorithm to run.  As such, our only requirement was to subset the total number of genes in each dataset; by default, they all had tens of thousands.  We chose to pick the genes so that each dataset became roughly square, to avoid potentially introducing `extreme lopsidedness' as a confounding factor.  Our process was as follows:

\paragraph{Step 1.} Select the top 2000 most variable genes.  This is a very common default preprocessing step and threshold.

\paragraph{Step 2.} Iterate from $s=1\%$ to $s=100\%$ in increments of $1\%$.  For each value of $s$, select only genes with at least $s$ nonzero values.  Whichever value of $s$ leads to the most square matrix, pick those genes.  We chose to pick genes based on sparsity to mirror another standard preprocessing step (throw out `low-quality' genes that do not appear much), and we quantized our values of $s$ to ensure that not all of our matrices were \textit{perfectly} square (which might also confound the results).

\section{Description of datasets used}
\label{apx:med-magma-datasets}

A brief description and citation has been given in Table \ref{tab:med-magma-datasets} for every used dataset.  Note that E-CURD-7 and E-ENAD-21 appear to be the same dataset; it is not a mistake.  Both appear separately in the scRNA-seq database; a different categorical variable has been used for each.

\begin{table}
    \centering
    \begin{tabular}{c|cccc}
        Dataset & Categorical Variable & Categories & Cells & Genes \\\hline
        E-MTAB-4850 \parencite{eltahla_linking_2016} & Sample Characteristic[cell line] & 2 & 60 & 63 \\
        E-GEOD-75367 \parencite{jordan_her2_2016} & Sample Characteristic[facs marker] & 4 & 72 & 70 \\
        E-MTAB-6142 \parencite{karlsson_transcriptomic_2017} & Sample Characteristic[cell cycle phase] & 3 & 96 & 91 \\
        E-GEOD-36552 \parencite{yan_single-cell_2013} & Sample Characteristic[cell type] & 6 & 115 & 127 \\
        E-CURD-10 \parencite{kim_application_2016} & {\small Sample Characteristic[growth condition]} & 3 & 118 & 104 \\
        E-GEOD-99795 \parencite{horning_single-cell_2018} & Factor Value[compound] & 2 & 144 & 128 \\
        E-GEOD-110499 \parencite{fan_linking_2018} & Sample Characteristic[organism part] & 2 & 168 & 169 \\
        E-MTAB-7249 \parencite{nelson_living_2020} & {\tiny Sample Characteristic}{\scriptsize [inferred cell type - authors labels]} & 7 & 185 & 185 \\
        E-MTAB-7606 \parencite{koutsakos_human_2019} & Sample Characteristic[phenotype] & 2 & 209 & 203 \\
        E-GEOD-81383 \parencite{gerber_mapping_2017} & Sample Characteristic[cell line] & 3 & 226 & 226 \\
        E-MTAB-6075 \parencite{fischer_signals_2019} & Factor Value[compound] & 3 & 236 & 231 \\
        E-GEOD-124858 \parencite{liu_single_2019} & Factor Value[passage] & 5 & 241 & 239 \\
        E-MTAB-7381 \parencite{fergusson_maturing_2018} & Sample Characteristic[individual] & 3 & 287 & 291 \\
        E-GEOD-111727 \parencite{golumbeanu_single-cell_2018} & Sample Characteristic[disease] & 2 & 320 & 324 \\
        E-GEOD-109979 \parencite{lu_single-cell_2018} & Factor Value[time] & 4 & 329 & 324 \\
        E-GEOD-100618 \parencite{karamitros_single-cell_2018} & Sample Characteristic[cell type] & 3 & 414 & 418 \\
        E-MTAB-8498 \parencite{chen_re-evaluation_2020} & Sample Characteristic[cell type] & 3 & 448 & 452 \\
        E-MTAB-9801 \parencite{jardine_blood_2021} & {\tiny Sample Characteristic}{\scriptsize [inferred cell type - authors labels]} & 12 & 484 & 501 \\
        E-GEOD-75688 \parencite{chung_single-cell_2017} & Sample Characteristic[histology] & 4 & 520 & 517 \\
        E-GEOD-86618 \parencite{xu_single-cell_2016} & Sample Characteristic[disease] & 2 & 537 & 538 \\
        E-GEOD-98556 \parencite{phillips_novel_2018} & Sample Characteristic[cell type] & 2 & 545 & 550 \\
        E-GEOD-83139 \parencite{wang_single-cell_2016} & {\tiny Sample Characteristic}{\scriptsize [inferred cell type - ontology labels]} & 8 & 617 & 612 \\
        E-CURD-7 \parencite{nguyen_profiling_2018} & Sample Characteristic[sampling site] & 2 & 631 & 627 \\
        E-ENAD-21 \parencite{nguyen_profiling_2018}& Sample Characteristic[cell type] & 2 & 631 & 627 \\
        E-GEOD-70580 \parencite{bjorklund_heterogeneity_2016} & Sample Characteristic[cell type] & 4 & 648 & 643 \\
        E-ENAD-20 \parencite{rambow_toward_2018} & Factor Value[time] & 4 & 667 & 673 \\
        E-GEOD-75140 \parencite{camp_human_2015} & Sample Characteristic[cell type] & 3 & 734 & 728 \\
        E-GEOD-89232 \parencite{breton_human_2016} & Sample Characteristic[cell type] & 2 & 957 & 926 \\
        E-CURD-6 \parencite{zhao_single-cell_2017} & Sample Characteristic[disease] & 4 & 986 & 1031 \\
        E-MTAB-10596 \parencite{hemeryck_organoids_2022} & {\small Sample Characteristic[growth condition]} & 3 & 993 & 984
    \end{tabular}
    \caption{All datasets analyzed during Section \ref{sec:med-magma-systemic}.  The `genes' column counts genes \textit{after} preprocessing.}
    \label{tab:med-magma-datasets}
\end{table}


\end{document}